%% LyX 1.6.4 created this file.  For more info, see http://www.lyx.org/.
%% Do not edit unless you really know what you are doing.

%documentclass[conference,english]{IEEEtran}
%\documentclass[,10pt,english]{IEEEtran}
%\documentclass[,draftclsnofoot,11pt,english,onecolumn]{IEEEtran}
\documentclass[,draftclsnofoot, 12pt, english, onecolumn]{IEEEtran}
\usepackage{cite}
\usepackage[T1]{fontenc}
\usepackage[latin9]{inputenc}
\usepackage{amsmath}
\usepackage{amssymb}

\makeatletter
%%%%%%%%%%%%%%%%%%%%%%%%%%%%%% User specified LaTeX commands.
\usepackage{amsthm}\usepackage{dsfont}\usepackage{array}\usepackage{mathrsfs}\usepackage{graphicx}
\makeatother

\usepackage{babel}

\setlength{\topmargin}{-.65 in}
\setlength{\oddsidemargin}{-.07 in}
\setlength{\evensidemargin}{0in}
\setlength{\textheight}{9.5 in}
\setlength{\textwidth}{6.7 in}

\begin{document}
\bibliographystyle{IEEEtran}

\title{On the Accuracy of the Wyner Model in Cellular Networks}
\author{Jiaming Xu, Jun Zhang and Jeffrey G. Andrews  \\
\thanks{ J. Xu and J. G. Andrews are with The University of Texas at Austin. Email: jxu@mail.utexas.edu and jandrews@ece.utexas.edu. J. Zhang is with The Hong Kong University of Science and Technology. Email: eejzhang@ust.hk. Contact author is J. G. Andrews. Date revised: \today}
}
\maketitle
\newtheorem{lemma}{Lemma}\newtheorem{theorem}{Theorem}\newtheorem{corollary}{Corollary}\newtheorem{remark}{Remark}

\begin{abstract}
The Wyner model has been widely used to model and analyze cellular networks due to its simplicity and analytical tractability.  Its key aspects include fixed user locations and the deterministic and homogeneous interference intensity.  While clearly a significant simplification of a real cellular system, which has random user locations and interference levels that vary by several orders of magnitude over a cell, a common presumption by theorists is that the Wyner model nevertheless captures the essential aspects of cellular interactions.  But is this true? To answer this question, we consider both uplink and downlink transmissions, and both outage-based and average-based metrics.  For the uplink, for both metrics, we conclude that the Wyner model is in fact quite accurate for systems with a sufficient number of simultaneous users, e.g. CDMA. Conversely, it is broadly inaccurate otherwise. With multicell processing, intracell TDMA is shown to be suboptimal in terms of average throughput, in sharp contrast to predictions using the Wyner model. Turning to the downlink, the Wyner model is highly inaccurate for outage since it depends largely on the user locations. However, for average or sum throughput, the Wyner model serves as an acceptable simplification in certain special cases if the interference parameter is set appropriately.
\end{abstract}

\section{Introduction}

Intercell interference is a key bottleneck of modern cellular networks. As a consequence, multicell processing (MCP), which can efficiently suppress intercell interference, is of great current interest to both researchers and cellular standard bodies \cite{Shamai04,Zhang04,Foschini06,Liang06,Goldsmith06,Choi07,Vrzic09,Zhang09,Ghosh10}. Surprisingly, analytical techniques for modeling such interference are in short supply. Grid-based models of multicell networks are too complex to handle analytically so researchers, particularly in the pursuit of information theoretic results, often resort to a simple multicellular model originally proposed by Wyner \cite{Wyner94}. Compared to real cellular networks, the most popular linear version of the Wyner model (see Fig. \ref{FigWynerModel}) makes three major simplifications: (i) only interference from two adjacent cells is considered; (ii) random user locations and therefore path loss variations are ignored; (iii) the interference intensity from each neighboring base station (BS) is characterized by a single fixed parameter $\alpha$ ($0 \le \alpha \le 1$).

\subsection{Backgound and Related Work}
This model was first used in \cite{Wyner94} to derive the capacity of uplink cellular networks with multicell processing (MCP), where it is shown that intracell TDMA is optimal and achieves the capacity. It was generalized in \cite{Shamai00} to account for flat-fading. It is proved that wideband transmission is advantageous over intracell TDMA and that fading increases capacity when the number of users is sufficiently large. In \cite{Shamai97} and \cite{Shamai97Part2}, the Wyner model is used to analyze the throughput of cellular networks under single-cell processing (SCP) and two-cell-site processing (TCSP). Later on, scaling results for the sum capacity were derived under the Wyner model with multiple-input-multiple-output (MIMO) links in \cite{Hanly06}. Recently, the Wyner model is extended to incorporate shadowing in \cite{Kaltakis09}.

Turning to the downlink, the Wyner model was used to analyze the average throughput with MCP in \cite{Shamai01}, where it is shown that a linear preprocessing and encoding scheme eliminates the intercell interference and dramatically improves downlink cellular performance compared to the conventional SCP approach. Thereafter, it was generalized to derive the downlink sum capacity \cite{Shamai07} and sum throughput of various precoding schemes such as zero-forcing (ZF), beamforming (BF), minimum mean square error (MMSE) \cite{Padovani08} and ZFBF \cite{Somekh06,SimeoneIT09July}. More recently, the Wyner model and its modifications have been widely used to evaluate different constrained coordination strategies in cellular networks, e.g., \cite{Simeone09,Sanderovich09,SimeoneIT09,Levy10}.

Despite the fairly large amount of literature based on the Wyner model, to our knowledge no serious effort has ever been made to validate this simple model, or to understand when it might be a reasonable approximation. Namely, how much is lost from these three simplifications? What might be an appropriate value of the key parameter $\alpha$? It is not at all clear when conclusions based on the Wyner model broadly hold for more realistic scenarios. In this paper, we attempt to provide detailed analysis and simulations to determine when the Wyner model is accurate, and when it is not. ``Accurate'' is a necessarily subjective term here which we use to mean that major trends are captured within a ``small'' constant, for example a metric can be tracked within a factor of some small constant.  ``Inaccurate'' means that key trends are in our opinion totally missed and the error cannot in general be bounded within a small constant, and results may be off by an order of magnitude or more.  Different audiences may care to draw their own conclusions on its relative accuracy: this paper intends to offer a mathematics-based starting point for such a judgment, and to provide initial conclusions as to when it \emph{may} be sufficiently accurate and also when it is almost \emph{certainly inaccurate}.

\subsection{Contributions}
We compare a $1$-D Wyner model to a $1$-D grid-based network with random user locations. In the first part of this paper, we study the uplink channel. Two important performance metrics in cellular systems are outage probability and (average or sum) throughput. From an outage and average throughput point of view, we show that with single-cell processing, the Wyner model is broadly inaccurate for systems with a small number of simultaneous users because the statistics cannot be captured by the single parameter $\alpha$. On the other hand, if there are many simultaneous users, the Wyner parameter $\alpha$ can be accurately tuned to capture key trends in outage and throughput. In cellular parlance, this means that TDMA and other orthogonal multiple access protocols evade characterization by the Wyner model, whereas CDMA and/or other non-orthogonal uplink protocols can be captured by it. In the case of multicell processing, we show that CDMA outperforms intracell TDMA, which is in direct contradiction to the original result of Wyner, where intracell TDMA is shown to be optimal and capacity-achieving \cite{Wyner94}.

In the second part of this paper, we turn to the downlink channel,  and find conclusions quite different from the uplink case. For outage, we show that with single-cell processing and perfect channel inversion, the Wyner model fails to capture significant variations over a cell in both TDMA and CDMA multicell networks. For throughput, with channel inversion and by tuning the parameter $\alpha$ appropriately, the Wyner model can characterize average throughput in CDMA with reasonable accuracy, but it is still inaccurate for those employing TDMA.  Furthermore, if equal transmit power per user is assumed instead, the Wyner model becomes much less accurate for both TDMA and CDMA. In the case of multicell processing with equal power allocation, it is shown that the Wyner model is quite accurate for sum throughput in CDMA with a properly selected $\alpha$, which we derive.

The unifying conclusion is that for certain scenarios, the Wyner model, which itself models an ``average'' interference condition, can be adapted to handle metrics like sum or average throughput which average over the cell.  But it cannot handle metrics that depend on user locations like outage probability with the exception of CDMA uplink in which case the Wyner model is accurate.

The rest of this paper is organized as follows. Section \ref{Model} introduces the system model and key definitions. Section \ref{UplinkSCP} and Section \ref{UplinkMCP} are devoted to the analysis of the uplink channel with single-cell processing and multicell processing respectively, while Section \ref{DownlinkSCP} and Section \ref{DownlinkMCP} present the corresponding analysis of the downlink channel. Section \ref{OFDMA} extends the results to OFDMA systems. Section \ref{conclusion} ends the paper with a summary and concluding remarks. Proofs are deferred to the appendices.

\section{System Model and Definition} \label{Model}
 The system model and key definitions in this paper are described in this section.
\subsection{The System Model}
Consider the $1$-D linear Wyner model, depicted in Fig.~\ref{FigWynerModel}, where there are $N$ cells located on a line, indexed by $ n \; (1\leq n\leq N)$, each covering a segment of length $2R$. BSs are located at the center of each cell and there are $K$ active users per cell. Intercell interference only comes from two neighboring cells. The channel gain between BS and its home user is $1$ and the intercell interference intensity is characterized by a deterministic and homogeneous parameter $\alpha$. Therefore, the locations of users are implicitly fixed.

In this paper, random user locations are considered and the network is redrawn in Fig.~\ref{FigWynerModelRandom}. In contrast to fixed user locations and deterministic and homogeneous intercell interference intensity, we assume all the users are randomly, independently, and uniformly distributed on the line in each cell. Thus, the intercell interference intensity is a random variable depending on user locations. The channel strength in this paper is assumed to be solely determined by pathloss, i.e., the received power at distance $r$ decays according to $r^{-\beta}$, where $ \beta \ge 2$ is the pathloss exponent. Fading is not considered, since it can be handled with standard diversity counter measures. For simplicity, lognormal shadowing is also neglected.

Under these assumptions, for the uplink, the received signal at BS $n$ at a given time is given by
\begin{align}
Y_{n}  = \underbrace{ \sum_{k=1}^{K} \sqrt {P_{n}^{(k)} (r_{n,n}^{(k)})^{-\beta}} X_{n}^{(k)}}_{\text{home user signals}}
+ \underbrace{\sum_{i=1}^{K} \sqrt {P_{n-1}^{(i)} (r_{n-1,n}^{(i)})^{-\beta} }  X_{n-1}^{(i)}}_{\text{interference from cell} \; n-1}
  + \underbrace{ \sum_{j=1}^{K} \sqrt {P_{n+1}^{(j)} (r_{n+1,n}^{(j)})^{-\beta} } X_{n+1}^{(j)} }_{\text{interference from cell} \; n+1}+ Z_{n}, \nonumber
\end{align}
\noindent where $Y_{n}$ is the received signal at BS $n$, $X_{n}^{(k)}$ is the transmit signal from user $k$ in cell $n$ with $\mathbb{E} [|X_{n}^{(k)}|^{2}] =1 $, $  \{Z_{n} \}$ is mutually independent zero-mean additive white Gaussian noise with variance $\sigma^2$, $r_{n,m}^{(k)}$ is the distance from user $k$ in cell $n$ to the BS $m$, and $P_{n}^{(k)} $ is the transmit power of user $k$ in cell $n$. Every user is assumed to achieve perfect channel inversion in the uplink, i.e., the received signal power of each user at its home BS is $P$:
\begin{align}
 P_{n}^{(k)} (r_{n,n}^{(k)})^{-\beta} = P.  \label{EquationReceivedPower}
\end{align}

For the downlink, the received signal of user $k$ in cell $n$ at a given time is given by
\begin{align}
Y_{n}^{(k)}  = a_n^{(k)}  X_{n}
+    b_n^{(k)} X_{n-1}
 +   c_n^{(k)} X_{n+1}  + Z_{n}^{(k)}, \label{EquationWynerModelDownlink}
\end{align}
\noindent where $Y_{n}^{(k)} $ is the received signal at user $k$ in cell $n$, $X_{n}$ is the transmit signal of BS $n$. $  \{ Z_{n}^{(k)} \}$ is mutually independent zero-mean additive white Gaussian noise with variance $\sigma^2$. Channel gains from BS $n$, $n-1$, $n+1$ to user $k$ in cell $n$ are denoted by  $ \{ a_n^{(k)},b_n^{(k)},c_n^{(k)} \}$ respectively, which are given by
\begin{align}
a_n^{(k)}= \left( r_0/ r_{n,n}^{(k)} \right)^{ \frac{\beta} {2} }, \quad b_n^{(k)}=\left( r_0/ r_{n,n-1}^{(k)}  \right)^{\frac{\beta} {2} },  \quad c_n^{(k)}= \left( r_0 /r_{n,n+1}^{(k)}   \right)^{\frac{\beta} {2}}, \nonumber
\end{align}
where $r_0$ is the received power reference distance.

Let $L_{n}^{(k)}$ denote the location of user $k$ in cell $n$. Since users are assumed to be independently and uniformly distributed over cells, $ \{L_{n}^{(k)} \}$ are i.i.d. random variables over $n$ and $k$ with distribution $ \mathbb{U}\:[-R, R]$, where $\mathbb{U}$ denotes the uniform distribution. As depicted in Fig.~\ref{FigWynerModelRandom}, $r_{n,n}^{(k)}, r_{n,n-1}^{(k)},r_{n,n+1}^{(k)}$ can be written as a function of $L_{n}^{(k)}$ and $R$, which are
\begin{align}
r_{n,n}^{(k)}= | L_{n}^{(k)} |, \quad r_{n,n-1}^{(k)}= 2R+ L_{n}^{(k)}, \quad r_{n,n+1}^{(k)}= 2R- L_{n}^{(k)}. \nonumber
\end{align}
\noindent Also, it follows that $ \{ r_{n,m}^{(k)} \}$ are independent over $n$ and $k$.

\subsection{Terminology}
In this paper, we consider various different system settings, which are explained as follows.
\begin{enumerate}
\item {\bf Intracell TDMA}: one user per cell is allowed to transmit at any time instant while users in different cells can transmit simultaneously.
\item {\bf CDMA}: all users in every cell are allowed to transmit simultaneously over the whole bandwidth. For the uplink, asynchronous CDMA is often used with pseudonoise sequences as signatures, so there exists intra-cell interference; while for the downlink, synchronous CDMA is often employed with orthogonal spreading codes and synchronous reception, so there is no intra-cell interference.
\item {\bf OFDMA}: OFDM is employed at the physical layer and all users in every cell are allowed to transmit simultaneously but every user in each cell is assigned a distinct subset of subcarriers, which eliminates the intra-cell interference.
\item {\bf Single-cell Processing (SCP)}: for the uplink, BSs only process transmit signals from their own cells and treat intercell interference as Gaussian noise; while for the downlink, BSs transmit signals with information only intended for their home users.
\item {\bf Multicell Processing (MCP)}: for the uplink, a joint receiver has access to all the received signals and an optimal decoder decodes all the transmit signals jointly; while for the downlink, the transmit signal from each BS contains information for all users.
\end{enumerate}

The relevant performance metrics used in this paper are as follows.
\begin{enumerate}
\item { \bf Outage Probability (OP) }: the probability that the received signal-interference-ratio (SIR) is smaller than a threshold.
\item {\bf Average Throughput (AvgTh)}: the expected user throughput where the expectation is taken over all possible user locations.
\item {\bf Sum Throughput (SumTh)}: the maximum sum rate of all users.
\end{enumerate}

\section{Uplink with SCP} \label{UplinkSCP}
In this section, we will study the accuracy of the Wyner model for describing the uplink channel in cellular networks with SCP. Two common multiaccess schemes (intracell TDMA and CDMA) are considered in this section and two key performance metrics, namely outage probability and average throughput, are investigated separately.

\subsection{Outage Probability} \label{Section_Uplink_SCP_Outage}

Consider intracell TDMA first. Since the interference-limited case is of interest in SCP, Gaussian noise $Z_{n}$ is ignored in this section. It follows that the signal-interference-ratio (SIR) of user $k$ in cell $n$ is a random variable given by
\begin{align}
\text{SIR}_{n}^{(k)}  = \frac{ P_n^{(k)} (r_{n,n}^{(k)})^{-\beta} }{ P_{n-1}^{(i)} (r_{n-1,n}^{(i)})^{-\beta}+ P_{n+1}^{(j)} (r_{n+1,n}^{(j)})^{-\beta} }. \nonumber
\end{align}
Substituting (\ref{EquationReceivedPower}), the SIR can be simplified to
\begin{align}
\text{SIR}_{n}^{(k)} = \left[  U_{n-1}^{(i)} + V_{n+1}^{(j)} \right]^{-1}, \label{Equation:TDMASIR}
\end{align}
\noindent where $U_{n-1}^{(i)}=\left(\frac{r_{n-1,n-1}^{(i)} } {r_{n-1,n}^{(i)} } \right)^{\beta}$ and $V_{n+1}^{(j)}=\left( \frac{r_{n+1,n+1}^{(j)} } {r_{n+1,n}^{(j)} } \right)^{\beta}$, and note that $\{U_{n}^{(i)}\}$ and $ \{V_{n}^{(j)}\}$ are i.i.d. random variables over $n$ and $i,j$ respectively. Also, $U_{n}^{(i)}$ and $V_{m}^{(j)}$ are i.i.d. for any $i$, $j$, and $n \ne m$.

To decode the received signal reliably, the SIR is constrained to be larger than a threshold $\theta$. If not, there arises outage and the outage probability $q$ is given by
\begin{align}
q & =  \mathbb{P}[\text{SIR}_{n}^{(k)}<\theta] = \mathbb{P} \left[U_{n-1}^{(i)}+V_{n+1}^{(j)} > \frac{1}{\theta} \right]. \nonumber
\end{align}

\noindent The outage probability can be lower bounded as
\begin{align}
 q  \ge \mathbb{P} \left[U_{n-1}^{(i)} > \frac{1}{\theta} \cup V_{n+1}^{(j)} > \frac{1}{\theta} \right]
                 =1-\mathbb{P}\left[U_{n-1}^{(i)} \le \frac{1}{\theta} \cap V_{n+1}^{(j)} \le \frac{1}{\theta} \right]
                = 1-\mathbb{P}^2 \left[U_{n-1}^{(i)} \le \frac{1}{\theta} \right],  \label{Probout}
\end{align}
where the last equality follows from the fact that $U_{n}^{(i)}$ and $V_{n+1}^{(j)}$ are i.i.d.. We also have
\begin{align}
\mathbb{P} \left [ U_{n-1}^{(i)} \le \frac{1}{\theta} \right ]
            = \mathbb{P} \left[ \frac{r_{n-1,n-1}^{(i)} } {r_{n-1,n}^{(i)} } <  \left(\frac{1}{\theta}\right)^{\frac{1}{\beta}} \right]
             = \mathbb{P} \left[ \frac{|L_{n-1}^{(i)}|}{d-L_{n-1}^{(i)}} <  \left(\frac{1}{\theta}\right)^{\frac{1}{\beta}} \right]. \nonumber
\end{align}

\noindent Since $L_{n}^{(i)} \sim  \mathbb{U}\:[-R, R]$, we can derive
\begin{align}
\mathbb{P} \left [ U_{n-1}^{(i)} \le \frac{1}{\theta} \right ] = \left \{
\begin{array}{ll}
 \frac{\theta^{- \frac{1}{\beta}}}{1+\theta^{- \frac{1}{\beta}}} + \frac{\theta^{- \frac{1}{\beta}}}{1 - \theta^{- \frac{1}{\beta}}} &   \theta > 3^\beta \\
 \frac{\theta^{- \frac{1}{\beta}}}{1+\theta^{- \frac{1}{\beta}}}+ \frac{1}{2}  &  1< \theta \le 3^\beta  \\
 1  & \theta \le 1 \\
\end{array} \right. , \nonumber
\end{align}
and the outage probability $q$ is lower bounded by (\ref{Probout}). The lower bound and simulation results of the outage probability are shown in Fig.~\ref{FigTDMA}. It can be seen that the lower bound is tight especially for a large pathloss exponent.

In the $1$-D Wyner model, $\text{SIR} \equiv \frac{1}{2 \alpha^2}$ and $\alpha$ is a deterministic parameter. Therefore, there is no notion of outage. However, when considering random user locations, intercell interference and thus SIR become random. The outage probability changes with the SIR threshold as shown in Fig.~\ref{FigTDMA}. Therefore, the Wyner model is inaccurate to characterize the outage probability in the intracell TDMA system.

In the following, let us consider the asynchronous CDMA where BSs will receive intercell interference from all user transmissions in two neighboring cells. Therefore, following the derivations in the intracell TDMA, the SIR at BS $n$ in CDMA can be derived as
\begin{align}
\text{SIR}_n =\frac{ G }{(K-1) + \sum_{k=1}^{K} (U_{n-1}^{(k)}  + V_{n+1}^{(k)})   },   \nonumber
\end{align}
where $G$ is the processing gain in CDMA. When $K \gg 1 $, since $U_{n-1}^{(k)}$ and $V_{n+1}^{(k)}$ are i.i.d., by the central limit theorem, $ \frac{1}{\sqrt{K} }  \sum_{k=1}^{K} ( U_{n-1}^{(k)}  + V_{n+1}^{(k)}) $ can be approximated by a Gaussian random variable with mean $\mu$ and variance $ \sigma_u^2$, which can be easily derived as
\begin{align} \label{EquationSigmaU}
\mu &=  2\sqrt{K}  \mathbb{E} [U_{n}^{(k)} ] \approx 2 \sqrt{K} \frac{\mathbb{E} [ |L_{n}^{(k)}| ] } {d^{\beta}} = 2 \sqrt{K} \frac{1}{(\beta+1) 2^{\beta}} , \nonumber \\
\sigma_u^2 & = 2 \left( \mathbb{E} [(U_{n}^{(k)})^2]- \mathbb{E}^2[U_{n}^{(k)} ] \right)
         \approx \frac{2}{2^{2 \beta}} \left[ \frac{1}{2 \beta +1}- \frac{1}{(\beta+1)^2} \right] ,
\end{align}
where the approximation is explained in Appendix \ref{ApproximationAlpha}.
\noindent Therefore, the outage probability can be derived as
\begin{align}
q   = \mathbb{P} \left[ \sum_{k=1}^{K} ( U_{n-1}^{(k)}  + V_{n+1}^{(k)})  > \frac{G}{\theta} - (K-1) \right ] \approx \mathcal{Q} \left(\frac{ G/{\theta} -(K-1) - \sqrt{K} \mu}{\sqrt{K} \sigma_u} \right), \label{OutageProb}
\end{align}
\noindent where $\mathcal{Q}(x)$ is the complementary cumulative distribution function (CCDF) of the standard Gaussian random variable.

Define the average intercell interference intensity with random user locations $\bar{\alpha}^{2}_{\text{ul}}$ as
\begin{align} \label{EquationDefAlphaUplinik}
\bar{\alpha}^{2}_{\text{ul}}= \mathbb{E} [U_{n}^{(k)} ]=\mathbb{E} [V_{n}^{(k)} ] \approx\frac{1}{(\beta+1)2^{\beta}}.
\end{align}
When $K \to \infty$, due to the law of large numbers:
\begin{align}
\lim_{K \to \infty} \frac{1}{K} \sum_{i=1}^{K} U_{n}^{(i)}  = \lim_{K \to \infty} \frac{1}{K} \sum_{j=1}^{K} V_{n+1}^{(j)}=  \bar{\alpha}^{2}_{\text{ul}}. \nonumber
\end{align}
Then SIR becomes deterministic and homogeneous, which is given by
\begin{align}
\text{SIR}_n=\frac{G}{K-1+ 2 \bar{\alpha}^{2}_{\text{ul}} K} \label{EquationSIRTDMA}.
\end{align}
Thus, the outage probability becomes either $0$ or $1$.

The equation (\ref{EquationSIRTDMA}) is the same as that derived under the Wyner model, if $\alpha = \bar{\alpha}_{\text{ul}}$. In other words, when $K$ is sufficiently large, the Wyner model is accurate and the parameter $\alpha$ can be tuned to characterize intercell interference intensity quite well. If users are uniformly distributed in each cell, $\alpha$ can be approximately determined by $\alpha^{2}= \mathbb{E} [U_{n}^{(k)} ] \approx\frac{1}{(\beta+1)2^{\beta}}$ . From this expression, we can see that when the path loss exponent $\beta$ increases, $\alpha^2$ decreases exponentially fast.

Simulation and numerical results according to equation (\ref{OutageProb}) are shown in Fig.~\ref{FigCDMA}. It shows that the Gaussian approximation of random interference is accurate and the outage probability increases sharply from $0$ to $1$ when $K \ge 5$. This implies that when there are more than $5$ active users per cell, the Wyner model characterizes the outage probability in the CDMA system very accurately.

\subsection{Average Throughput} \label{CapacitySCP}

In this subsection, to make a fair comparison with the available results obtained from the Wyner model, optimal multiuser detection (minimum mean square error detector plus successive interference cancelation) and perfect synchronization are assumed for the CDMA uplink with $G=K$, in which case the intracell interference is suppressed in effective, but the intercell interference still exists and is treated as Gaussian noise. Therefore, the average throughput in intracell TDMA and CDMA under the Wyner model are the same and given by \cite{Shamai97}
\begin{align}
\bar{R}_{\text{TDMA}}^{\star}= \bar{R}_{\text{CDMA}}^{\star} = \frac{1}{2K} \log \left( 1+ \frac{1}{2 \alpha^2} \right). \label{Wyner_Throughput}
\end{align}
After considering random user locations, the average throughput in intracell TDMA and CDMA become
\begin{align}
R_{\text{TDMA}} ^{\star} &= \frac{1}{2K} \mathbb{E} \left[ \log \left( 1+ \frac{1}{U_{n-1}^{(i)} +V_{n+1}^{(j)}} \right) \right], \nonumber \\
R_{\text{CDMA}}^{\star} &=  \frac{1}{2K} \mathbb {E} \left[ \log \left( 1+ \frac{1} { \frac{1}{K} \sum_{k=1}^{K} (U_{n-1}^{(k)}   +  V_{n+1}^{(k)}) } \right) \right].
\end{align}
The fact that TDMA outperforms CDMA and that random user locations improve average throughput is established by the following theorem.
\begin{theorem} \label{TheoremCDMACapcitySCP}
TDMA is advantageous versus CDMA in terms of average throughput and random user locations improve throughput under SCP, i.e.,
\begin{align}
R_{\text{TDMA}}^{\star}  \ge R_{\text{CDMA}}^{\star} \ge \bar{R}_{\text{TDMA}}^{\star} = \bar{R}_{\text{CDMA}}^{\star}, \nonumber
\end{align}
where the parameter $\alpha$ in the Wyner model is determined by (\ref{EquationDefAlphaUplinik}). The inequalities become equalities when the $\{ U_{n}^{(k)} \}$ and $\{ V_n^{(k)} \}$ are deterministic, i.e., the user locations are fixed and their distances from home BSs are the same.
\end{theorem}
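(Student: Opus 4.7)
The plan is to reduce all three quantities to expectations of the single convex function $f(x) = \log(1 + 1/x)$ evaluated at various random or deterministic arguments, and then apply Jensen's inequality in two different directions to obtain the chain of inequalities.

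First I would verify that $f(x) = \log(1+1/x)$ is strictly convex on $(0,\infty)$, which follows from a quick computation showing $f''(x) = (2x+1)/(x(x+1))^2 > 0$. Next, let $W_k = U_{n-1}^{(k)} + V_{n+1}^{(k)}$ and note that $\{W_k\}$ are i.i.d.\ with common mean $\mathbb{E}[W_k] = 2\bar{\alpha}^2_{\text{ul}} = 2\alpha^2$ by the definition in (\ref{EquationDefAlphaUplinik}). With this notation,
\begin{align}
R_{\text{TDMA}}^{\star} = \frac{1}{2K}\mathbb{E}[f(W_1)], \qquad R_{\text{CDMA}}^{\star} = \frac{1}{2K}\mathbb{E}\!\left[f\!\left(\tfrac{1}{K}\sum_{k=1}^{K} W_k\right)\right], \qquad \bar{R}^{\star}_{\text{TDMA}} = \bar{R}^{\star}_{\text{CDMA}} = \frac{1}{2K}f(2\alpha^2). \nonumber
\end{align}

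The two inequalities then follow from Jensen applied in opposite directions. For $R_{\text{TDMA}}^{\star}\ge R_{\text{CDMA}}^{\star}$, I would apply Jensen inside the expectation: convexity of $f$ gives $f\!\left(\frac{1}{K}\sum_k W_k\right) \le \frac{1}{K}\sum_k f(W_k)$ pointwise, so taking expectations and using the i.i.d.\ property yields $\mathbb{E}[f(\frac{1}{K}\sum_k W_k)] \le \mathbb{E}[f(W_1)]$. For $R_{\text{CDMA}}^{\star}\ge \bar{R}_{\text{CDMA}}^{\star}$, I would apply Jensen on the outer expectation: $\mathbb{E}[f(\frac{1}{K}\sum_k W_k)] \ge f(\mathbb{E}[\frac{1}{K}\sum_k W_k]) = f(2\alpha^2)$.

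For the equality condition, since $f$ is \emph{strictly} convex, Jensen's inequality is an equality if and only if the random argument is almost surely constant. Both inequalities therefore become equalities precisely when the $\{U_n^{(k)}\}$ and $\{V_n^{(k)}\}$ are deterministic, matching the stated condition that user distances from home BSs coincide. There is no real obstacle here since the structure collapses cleanly once one sees that $U+V$ enters only through $f$; the only subtle point is bookkeeping the two uses of Jensen (one using the convexity chord above a single convex combination, one using $\mathbb{E}[f(\cdot)] \ge f(\mathbb{E}[\cdot])$) and verifying that the Wyner baseline is exactly $f$ at the common mean, which is immediate from the definition of $\bar{\alpha}^2_{\text{ul}}$.
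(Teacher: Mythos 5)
Your proposal is correct and follows essentially the same route as the paper's proof: both inequalities come from Jensen's inequality applied to the convex function $\log(1+1/x)$, first pointwise on the convex combination $\frac{1}{K}\sum_k W_k$ to get $R_{\text{TDMA}}^{\star}\ge R_{\text{CDMA}}^{\star}$, then on the outer expectation to get $R_{\text{CDMA}}^{\star}\ge \bar{R}_{\text{CDMA}}^{\star}$, with the Wyner baseline identified as $f(2\alpha^2)$ via the definition of $\bar{\alpha}^2_{\text{ul}}$. Your explicit verification of strict convexity and the resulting equality characterization is a minor tightening of the paper's argument, which simply asserts convexity of $\log(1+\rho x^{-1})$.
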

\begin{proof}
See Appendix \ref{ProofTheorem1}.
\end{proof}

Simulation results in Fig.~\ref{ErgodicThroughput} shows that the Wyner model is quite accurate to characterize the average throughput in CDMA when $K > 20$ but inaccurate for TDMA. Also, it is shown that over all range of $K$ and $\beta$, intracell TDMA outperforms CDMA and random user locations increase the throughput substantially, e.g., when $\beta=4$, the throughput is increased by $100\%$. The Wyner model fails to capture either fact.

\section{Uplink with MCP} \label{UplinkMCP}

In this section, the average throughput in intracell TDMA and CDMA with MCP is investigated. The average throughput in intracell TDMA can be derived as
\begin{align}
C_{\text{TDMA}}^{\star}  = \lim_{N \to \infty} \frac{1}{KN} \mathbb{E} [I( \{ Y_n \}; \{ X_n \} )| \{L_n \} ]
            = \lim_{N \to \infty} \frac{1}{KN} \mathbb{E} \left[ \log ( \det ( \mathbf{ \Lambda_N} ) ) \right] ,
\end{align}
\noindent where $N$ is the number of cells and the expectation is taken over $N$ i.i.d. random user locations $\{L_n \}$. The $N \times N$ matrix $\sigma^2 \mathbf{\Lambda_N}$ is the covariance matrix of the conditioned Gaussian output vector $\{ Y_n \}$ and $\mathbf{\Lambda_N}$ is
\begin{align} \label{CovTDMA}
[ \mathbf{\Lambda_N} ]_{m,n} = \left \{
\begin{array}{ll}
 1+ KS ( 1 + U_{n-1}^{(i)} + V_{n+1}^{(j)}) &  (n,n) \\
 KS ( (U_{n}^{(k)})^{\frac{1}{2}}+ (V_{n+1}^{(j)})^{\frac{1}{2}} )  &  (n,n+1) \\
 KS ( (U_{n-1}^{(i)})^{\frac{1}{2}}+ (V_{n}^{(k)})^{\frac{1}{2}} )  & (n,n-1) \\
 KS ( (U_{n+1}^{(j)})^{\frac{1}{2}} (V_{n+1}^{(j)})^{\frac{1}{2}} ) & (n,n+2) \\
 KS ( (U_{n-1}^{(i)})^{\frac{1}{2}} (V_{n-1}^{(i)})^{\frac{1}{2}} ) & (n,n-2) \\
 0                     & \text{otherwise} \\
\end{array} \right. ,
\end{align}
where $K S= \frac{KP}{\sigma^2}$ is the SNR of each user in intracell TDMA.

Similarly, the average throughput in CDMA is given by
\begin{align}
C_{\text{CDMA}}^{\star} & =
\lim_{N \to \infty} \frac{1}{KN} \mathbb{E} [ I( \{ Y_n \}; \{ X_n^{(k)} \}  | \{ L_n^{(k)} \} )]
            = \lim_{N \to \infty} \frac{1}{KN} \mathbb{E} \left[ \log ( \det ( \mathbf{\Lambda_N}) )\right] , \nonumber
\end{align}
where the expectation is over $NK$ i.i.d. random locations of users $\{L_{n}^{(k)}\}$. The $N \times N$ matrix $\sigma^2 \mathbf{\Lambda_N}$ is the covariance matrix of the conditioned Gaussian output vector $\{ Y_n \}$ and $\mathbf{\Lambda_N}$ is
\begin{align}
[ \mathbf{\Lambda_N} ]_{m,n} = \left \{
\begin{array}{ll}
 1+ S \sum_{k=1}^K ( 1 + U_{n-1}^{(k)} + V_{n+1}^{(k)} ) &  (n,n) \\
 S \sum_{k=1}^K ( (U_{n}^{(k)})^{\frac{1}{2}}+ (V_{n+1}^{(k)})^{\frac{1}{2}} )  &  (n,n+1) \\
 S \sum_{k=1}^K ( (U_{n-1}^{(k)})^{\frac{1}{2}}+ (V_{n}^{(k)})^{\frac{1}{2}} )  & (n,n-1) \\
 S \sum_{k=1}^K ( (U_{n+1}^{(k)})^{\frac{1}{2}} (V_{n+1}^{(k)})^{\frac{1}{2}} ) & (n,n+2) \\
 S \sum_{k=1}^K ( (U_{n-1}^{(k)})^{\frac{1}{2}} (V_{n-1}^{(k)})^{\frac{1}{2}} ) & (n,n-2) \\
 0                     & \text{otherwise} \\
\end{array} \right. ,
\end{align}
\noindent where $S= \frac{P}{\sigma^2}$ is the SNR of each user. As expected, by setting $K=1$, the covariance matrix is the same in both systems.

In the following theorem, we prove that CDMA is advantageous to intracell TDMA.

\begin{theorem} \label{TheoremUplinkCapacity}
The average throughput of the CDMA system is larger than that of TDMA, i.e.,
$
C_{\text{CDMA}}^{\star} \ge C_{\text{TDMA}}^{\star} . \nonumber
$
Equality is achieved when the user locations are fixed and their distances from the home BSs are the same.
\end{theorem}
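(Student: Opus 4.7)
The plan is to reduce the inequality to an application of the concavity of $\log\det$ on positive semidefinite matrices. First I would rewrite each covariance matrix as the identity plus a sum of rank-one contributions, one per active user. Let $h_n^{(k)}\in\mathbb{R}^N$ be the channel vector of user $k$ in cell $n$, namely $(h_n^{(k)})_n = 1$, $(h_n^{(k)})_{n+1} = (U_n^{(k)})^{1/2}$, $(h_n^{(k)})_{n-1} = (V_n^{(k)})^{1/2}$, and zero elsewhere. A direct check against the entrywise formulas for $\mathbf{\Lambda_N}$ shows
\begin{align}
\mathbf{\Lambda}_N^{\text{CDMA}} = I + S \sum_{n=1}^N \sum_{k=1}^K h_n^{(k)} \bigl(h_n^{(k)}\bigr)^{T}, \qquad \mathbf{\Lambda}_N^{\text{TDMA}} = I + KS \sum_{n=1}^N h_n \bigl(h_n\bigr)^{T}, \nonumber
\end{align}
where in the TDMA matrix $h_n$ is the channel vector of the one active user per cell.

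Next, I would group the CDMA sum by the inner user index. Define the positive semidefinite matrices
\begin{align}
A_k \;=\; S\sum_{n=1}^N h_n^{(k)}\bigl(h_n^{(k)}\bigr)^T, \qquad k=1,\ldots,K, \nonumber
\end{align}
so that $\mathbf{\Lambda}_N^{\text{CDMA}} = I + \sum_k A_k = I + K\bar A$ with $\bar A = \tfrac{1}{K}\sum_k A_k$. Because $X \mapsto \log\det(I + KX)$ is concave on the cone of positive semidefinite matrices, Jensen's inequality yields
\begin{align}
\log\det\bigl(\mathbf{\Lambda}_N^{\text{CDMA}}\bigr) \;=\; \log\det(I + K\bar A) \;\ge\; \frac{1}{K}\sum_{k=1}^K \log\det(I + KA_k). \nonumber
\end{align}

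The key observation is that each matrix $I + KA_k = I + KS\sum_n h_n^{(k)}(h_n^{(k)})^T$ has exactly the structural form of $\mathbf{\Lambda}_N^{\text{TDMA}}$, with the slice $\{h_n^{(k)}\}_n$ playing the role of the TDMA channel vectors. Since $\{L_n^{(k)}\}$ are i.i.d.\ uniform over both indices, for each fixed $k$ the collection $\{L_n^{(k)}\}_n$ has the same joint distribution as $\{L_n\}_n$; hence $\mathbb{E}[\log\det(I+KA_k)] = \mathbb{E}[\log\det(\mathbf{\Lambda}_N^{\text{TDMA}})]$ for every $k$. Taking expectations in the Jensen bound, dividing by $KN$, and letting $N\to\infty$ then gives $C_{\text{CDMA}}^{\star} \ge C_{\text{TDMA}}^{\star}$. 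Equality in Jensen's inequality requires $A_1 = \cdots = A_K$, which, after matching diagonal and off-diagonal entries and using $(h_n^{(k)})_n = 1$, forces $L_n^{(k)} = L_n^{(j)}$ for all $n,k,j$ almost surely, i.e., the degenerate situation described in the theorem.

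The only nontrivial step is the identification of $I+KA_k$ with a TDMA covariance matrix of the correct form; once the channel-vector parametrization above is in place, the entries line up exactly and the rest is standard concavity of $\log\det$ plus the exchangeability of the user-location ensemble. I do not expect any obstacle beyond bookkeeping on the five nonzero bands of $\mathbf{\Lambda}_N$.
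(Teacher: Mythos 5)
Your proof is correct and follows essentially the same route as the paper: the paper also writes $\mathbf{\Lambda}_N^{\text{CDMA}}=\frac{1}{K}\sum_{k=1}^{K}\mathbf{\Lambda}_N^{k}$ with each $\mathbf{\Lambda}_N^{k}$ a TDMA covariance matrix (your $I+KA_k$) and applies Jensen's inequality with the concavity of $\log\det$. Your rank-one channel-vector parametrization simply makes explicit the band-by-band bookkeeping that the paper asserts without verification.
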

\begin{proof}
See Appendix \ref{ProofTheorem2}.
\end{proof}
\begin{remark}
In \cite{Wyner94}, it was shown that intracell TDMA is optimal and achieves the capacity under the Wyner model. However, when taking random user locations into consideration, intracell TDMA becomes suboptimal and the Wyner model leads to an inaccurate conclusion.
\end{remark}

\section{Downlink with SCP} \label{DownlinkSCP}

In this section, we will study the accuracy of the Wyner model in the downlink with SCP by analyzing the outage probability and average throughput. Intracell TDMA and CDMA under two different power allocation scenarios, namely perfect channel inversion (PCI) and equal transmit power per user (ETP), are considered.

\subsection{Perfect Channel Inversion (PCI) }
With SCP, the transmit signal at BS $n$ can be rewritten as a weighted summation of signals intended for its home users, that is, $X_n = \sum_{k=1}^K \sqrt {P_{n}^{(k)} } X_{n}^{(k)}$, where $P_{n}^{(k)}$ and $ X_{n}^{(k)}$ are respectively the transmit power and  signal for user $k$ in cell $n$ with $\mathbb{E} [ |X_{n}^{(k)}|^2] =1$. Since perfect channel inversion is assumed in this subsection, i.e., the received power of information signal of each user is a constant, equation (\ref{EquationReceivedPower}) holds.

Consider intracell TDMA. For the same reason as the uplink, Gaussian noise $Z_{n}^{(k)}$ is ignored in SCP. It follows that SIR of user $k$ in cell $n$ is a random variable given by
\begin{align}
\text{SIR}_n^{k} = \frac{ (r_{n,n}^{(k)} )^{-\beta} P_{n}^{(k) } } { ( r_{n,n-1}^{(k)} )^{-\beta} P_{n-1}^{(i) }+ (r_{n,n+1}^{(k)})^{-\beta} P_{n+1}^{(j)} }. \nonumber
\end{align}
Substituting (\ref{EquationReceivedPower}), the SIR can be simplified to
\begin{align}
\text{SIR}_n^{k} = \left[ A_{n-1}^{(i)}  + B_{n+1}^{(j)} \right]^{-1}, \nonumber
\end{align}
where $A_{n-1}^{(i)} = \left( \frac{ r_{n-1,n-1}^{(i)} }{ r_{n,n-1}^{(k)} } \right)^{\beta} $ and $B_{n+1}^{(j)} = \left( \frac{ r_{n+1,n+1}^{(j)} }{ r_{n,n+1}^{(k)} } \right)^{\beta}$.
Then, the outage probability $q$ is given by
\begin{align}
q  =  \mathbb{P}[\text{SIR}_n^{k}<\theta] = \mathbb{P} \left[ A_{n-1}^{(i)}  + B_{n+1}^{(j)}> \frac{1}{\theta} \right]. \nonumber
\end{align}
\noindent With the similar derivations as the uplink case, it can be lower bounded as
\begin{align}
 q \ge 1- \mathbb{P} \left[ A_{n-1}^{(i)}  \le \frac{1}{\theta}  \right]  \cdot \mathbb{P}  \left[ B_{n+1}^{(j)}\le \frac{1}{\theta} \right]. \label{equation:TDMA:downlinkoutage}
\end{align}
Since $L_{n}^{(i)} \sim  \mathbb{U}\:[-R, R]$, it can be easily derived as
\begin{align}
\mathbb{P} \left[ A_{n-1}^{(i)} \le \frac{1}{\theta}  \right]
 = \min \left \{   \left(1/ \theta \right)^{\frac{1}{\beta}} \left( r_{n,n-1}^{(k)} / {R} \right), 1  \right \}, \nonumber \\
\mathbb{P} \left[ B_{n+1}^{(j)} \le \frac{1}{\theta}  \right]
 = \min \left \{ \left(1 /\theta \right)^{\frac{1}{\beta}} \left( r_{n,n+1}^{(k)}/ {R} \right) , 1 \right \}, \nonumber
\end{align}
and the outage probability $q$ is lower bounded by (\ref{equation:TDMA:downlinkoutage}), which is tight as shown in Fig.~\ref{FigTDMANoPowerOutage}.

Similar to the uplink case, in the 1-D Wyner model for the downlink, $ \text{SIR} \equiv \frac{1}{2 \alpha^2}$ and there is no notion of outage. However, with random user locations, SIR becomes random and the outage probability changes with the SIR threshold as depicted in Fig.~\ref{FigTDMANoPowerOutage}. Also, it shows that in contrast to the uplink case, the outage probability of a particular user in the downlink also depends on its location. The Wyner model fails to capture these key facts.

Next, consider synchronous CDMA. The SIR for a particular user can be derived as
\begin{align}
\text{SIR}_n^{k}  &= \frac{ G (r_{n,n}^{(k)} )^{-\beta} P_{n}^{(k) } } { ( r_{n,n-1}^{(k)} )^{-\beta} \sum_{i=1}^K P_{n-1}^{(i) }+ (r_{n,n+1}^{(k)})^{-\beta} \sum_{j=1}^K P_{n+1}^{(j)} }  = \frac{ G } { \sum_{i=1}^K A_{n-1}^{(i)} + \sum_{j=1}^K B_{n+1}^{(j)} }, \nonumber
\end{align}
where $G$ is the processing gain in CDMA. When $K \gg 1 $, since $ r_{n-1,n-1}^{(i)}$ and $r_{n+1,n+1}^{(j)}$ are i.i.d., by the central limit theorem, $ \frac{1}{\sqrt{K} } \left( \sum_{i=1}^K A_{n-1}^{(i)} + \sum_{j=1}^K B_{n+1}^{(j)} \right) $ can be approximated by a Gaussian random variable with mean $\mu_r$ and variance $ \sigma_r^2$, which can be easily derived as
\begin{align}
\mu_r &=  \frac{\sqrt{K} }{(\beta+1)}  \left[ \left( R / r_{n,n-1}^{(k)} \right) ^\beta +  \left( R/ r_{n,n+1}^{(k)}  \right) ^\beta \right] , \nonumber \\
\sigma_r^2 & =   \frac{\beta^2 }{(\beta+1)^2(2 \beta +1) }  \left[ \left( R /r_{n,n-1}^{(k)}  \right) ^{2\beta} +  \left( R /r_{n,n+1}^{(k)}  \right) ^{2\beta} \right] . \nonumber
\end{align}
\noindent Therefore, the outage probability can be derived as
\begin{align}
q = \mathbb{P} \left[ \text{SIR}_n^{k} <\theta \right] \approx \mathcal{Q} \left(\frac{ G/{\theta} - \sqrt{K} \mu_r}{ \sqrt{K} \sigma_r} \right). \label{equation:CDMA:downlink:outage}
\end{align}

Define the average downlink intercell interference intensity with random user locations $\bar{\alpha}_{\text{dl}} $ as
\begin{align} \label{EquationDefAlphaCI}
\bar{\alpha}^{2}_{\text{dl}}= \mathbb{E} [A_{n-1}^{(i)} ] = \mathbb{E}[B_{n+1}^{(j)} ] = \frac{1} {(\beta+1) \gamma },
\end{align}
where $\gamma^{-1}= \mathbb{E} \left[ \left( R / r_{n,n \pm 1}^{(k)} \right) ^\beta \right]  = \frac{1- 3^{1-\beta}}{ 2 (\beta-1) }$.  If $K \to \infty$, due to the law of large numbers:
\begin{align}
\lim_{K \to \infty} \frac{1}{K}  \sum_{i=1}^{K}  A_{n-1}^{(i)} = \lim_{K \to \infty} \frac{1}{K}  \sum_{j=1}^{K} B_{n+1}^{(j)} = \gamma \; \bar{\alpha}^{2}_{\text{dl}}  \left(  R/  r_{n,n-1}^{(k)} \right)^\beta. \nonumber
\end{align}
Then, SIR becomes deterministic, which is given by
\begin{align}
\text{SIR}_n=\frac{G}{ \gamma \; \bar{\alpha}^{2}_{\text{dl}}  K \left( \left( R/ r_{n,n-1}^{(k)}  \right) ^\beta +  \left( R/ r_{n,n+1}^{(k)}  \right) ^\beta \right) } .
\end{align}
Thus, the outage probability is either $0$ or $1$ depending on both the SIR threshold and user locations; while in the Wyner model, it only depends on the SIR threshold.

Simulation and numerical results according to (\ref{equation:CDMA:downlink:outage}) are shown in Fig.~\ref{FigCDMADownlinkOutage}. It shows that the Gaussian approximation of random interference for a given user location is accurate and the outage probability increases sharply from $0$ to $1$ when $K \ge 50$. Compared to the results in the uplink, the new message here is that in the downlink the outage probability still varies over user locations and the Wyner model ignores this key fact.

Turn to throughput next. The average throughput in intracell TDMA and synchronous CDMA under the Wyner model are the same and simply given by (\ref{Wyner_Throughput}). With random user locations, the average throughput in intracell TDMA and synchronous CDMA with spreading gain $G=K$ becomes
\begin{align}
\bar{R}_{\text{TDMA}} &= \frac{1}{2K} \mathbb{E} \left[ \log \left(1+ \frac{1}{A_{n-1}^{(i)} + B_{n+1}^{(j)} } \right)  \right],  \\
\bar{R}_{\text{CDMA}} &= \frac{1}{2K} \mathbb{E} \left[ \log \left(1+ \frac{K}{ \sum_{i=1}^K A_{n-1}^{(i)} + \sum_{j=1}^K B_{n+1}^{(j)} } \right) \right].
\end{align}
The fact that by appropriately tuning parameter $\alpha$, the Wyner model is able to provide a lower bound to the average throughput with random user locations is established by the following theorem. The proof is similar to that of Theorem \ref{TheoremCDMACapcitySCP} and omitted for brevity.
\begin{theorem}
$
\bar{R}_{\text{TDMA}} \ge \bar{R}_{\text{CDMA}} \ge R^{\star}_{\text{TDMA}}=R^{\star}_{\text{CDMA}},
$
where $R^{\star}_{\text{TDMA}}$ and $R^{\star}_{\text{CDMA}}$ are defined in (\ref{Wyner_Throughput}), but the parameter $\alpha$ in the Wyner model is now chosen by (\ref{EquationDefAlphaCI}).
\end{theorem}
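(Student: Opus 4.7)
The plan is to reduce all inequalities to two applications of Jensen's inequality, exploiting that $f(x) := \log(1 + 1/x)$ is convex on $(0, \infty)$. I would verify this first via the quick computation $f''(x) = (2x+1)/[x(x+1)]^2 > 0$, which is the only analytic ingredient beyond rearranging sums.

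For the right inequality $\bar{R}_{\text{CDMA}} \ge R^\star_{\text{CDMA}}$, I would introduce the sample means $\bar{A} := \frac{1}{K}\sum_{i=1}^{K} A_{n-1}^{(i)}$ and $\bar{B} := \frac{1}{K}\sum_{j=1}^{K} B_{n+1}^{(j)}$, so that $\bar{R}_{\text{CDMA}} = \frac{1}{2K}\mathbb{E}[f(\bar{A}+\bar{B})]$, and apply Jensen to pull the expectation inside $f$. From (\ref{EquationDefAlphaCI}), $\mathbb{E}[\bar{A}+\bar{B}] = 2\bar{\alpha}^{2}_{\text{dl}}$, so with the theorem's choice $\alpha^2 = \bar{\alpha}^{2}_{\text{dl}}$ the bound collapses to exactly $R^\star_{\text{CDMA}}$ as defined in (\ref{Wyner_Throughput}). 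The TDMA-versus-Wyner comparison is identical, since $\mathbb{E}[A_{n-1}^{(i)} + B_{n+1}^{(j)}] = 2\bar{\alpha}^{2}_{\text{dl}}$ as well.

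For the left inequality $\bar{R}_{\text{TDMA}} \ge \bar{R}_{\text{CDMA}}$, I would pair the interferers as $W_i := A_{n-1}^{(i)} + B_{n+1}^{(i)}$ for $i = 1,\ldots,K$. By the independence structure recorded in Section \ref{Model}, the $\{W_i\}$ are iid, $W_1$ is distributed as the argument $A_{n-1}^{(i)} + B_{n+1}^{(j)}$ appearing in $\bar{R}_{\text{TDMA}}$, and $\bar{A}+\bar{B} = \frac{1}{K}\sum_{i=1}^{K} W_i$. Applying Jensen pointwise to $f$ gives $f\bigl(\frac{1}{K}\sum_{i} W_i\bigr) \le \frac{1}{K}\sum_i f(W_i)$, and taking expectations yields $\mathbb{E}[f(\bar{A}+\bar{B})] \le \mathbb{E}[f(W_1)]$, which is precisely $\bar{R}_{\text{CDMA}} \le \bar{R}_{\text{TDMA}}$ after dividing by $2K$. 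Each Jensen step becomes an equality exactly when its input is deterministic, which recovers the stated equality case where user locations are fixed and equidistant from their home BSs.

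The main obstacle, as in the proof of Theorem \ref{TheoremCDMACapcitySCP}, is purely bookkeeping: one must check that the pairing $W_i = A_{n-1}^{(i)}+B_{n+1}^{(i)}$ is compatible with the joint distribution of the interferers, i.e.\ independence across user index and across the two neighboring cells, so that $\{W_i\}$ are genuinely iid and Jensen applies cleanly. Once that observation is in place, the two Jensen applications finish the chain of inequalities.
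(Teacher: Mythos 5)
Your proof is correct and is essentially the proof the paper intends: the paper omits it as ``similar to Theorem \ref{TheoremCDMACapcitySCP}'', whose argument is exactly your two applications of Jensen's inequality to the convex map $x \mapsto \log(1+1/x)$, with the right-hand inequality collapsing to $R^{\star}_{\text{CDMA}}$ because $\mathbb{E}[A_{n-1}^{(i)}+B_{n+1}^{(j)}]=2\bar{\alpha}^{2}_{\text{dl}}$ under the choice (\ref{EquationDefAlphaCI}). One small imprecision: the $W_i = A_{n-1}^{(i)}+B_{n+1}^{(i)}$ are identically distributed but not unconditionally independent, since each shares the tagged user's distances $r_{n,n\pm1}^{(k)}$ in its denominators; this is harmless because the pointwise Jensen step $f\bigl(\frac{1}{K}\sum_i W_i\bigr) \le \frac{1}{K}\sum_i f(W_i)$ needs only identical marginals, not independence.
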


\subsection{Equal Transmit Power Per User (ETP) }
In this subsection, equal transmit power per user is assumed, where BSs are always transmitting at the fixed maximum power $P$. Then, it follows that the SIR of user $k$ in cell $n$ in intracell TDMA and synchronous CDMA with $G=K$ are the same and given by
\begin{align}
\text{SIR}_{\text{TDMA}}= \text{SIR}_{\text{CDMA}}= \frac{ (a_n^{(k)} )^2 P }{ ( b_n^{(k)} )^2 P+ (c_n^{(k)})^2 P }= \frac{1}{U_{n}^{(k)} + V_{n}^{(k)}}, \nonumber
\end{align}
where $U_{n}^{(k)}$ and $V_{n}^{(k)}$ are the same as defined in the uplink.

In contrast to perfect channel inversion, here the $\text{SIR}_n^{k}$ only depends on the location of user $k$ in cell $n$ and is independent of the user locations in neighboring cells. Therefore, strictly speaking, there is no outage here from a particular user's perspective. Thus, we will focus on the average throughput.

With random user locations, the average throughput in intracell TDMA and synchronous CDMA with $G=K$ are the same and given by
\begin{align}
\bar{R}_{\text{TDMA}} =\bar{R}_{\text{CDMA}} =  \frac{1}{2K} \mathbb{E} \left[ \log \left(1+ \frac{1}{U_{n}^{(k)} + V_{n}^{(k)} } \right) \right], \nonumber
\end{align}

Similar to perfect channel inversion, the Wyner model can provide a lower bound to the average throughput by tuning the parameter $\alpha$, which is established by the following theorem.
\begin{theorem}
$\bar{R}_{\text{TDMA}} = \bar{R}_{\text{CDMA}} \ge R^{\star}_{\text{TDMA}}=R^{\star}_{\text{CDMA}}$, where $G=K$ in CDMA and the parameter $\alpha$ in the Wyner model is given by (\ref{EquationDefAlphaUplinik}).
\end{theorem}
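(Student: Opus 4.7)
The plan is very direct and parallels the proof of Theorem~\ref{TheoremCDMACapcitySCP}. The equality $\bar{R}_{\text{TDMA}}=\bar{R}_{\text{CDMA}}$ is immediate from the display just above the theorem: under equal transmit power with $G=K$, the per-user SIR in both schemes collapses to the same random variable $1/(U_{n}^{(k)}+V_{n}^{(k)})$, so the two average-throughput expressions coincide term by term. Only the inequality $\bar{R}_{\text{TDMA}}\ge R^{\star}_{\text{TDMA}}$ therefore needs proof.

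To establish it, I would apply Jensen's inequality to
\[
  f(x) \;=\; \log\!\Bigl(1+\tfrac{1}{x}\Bigr) \;=\; \log(x+1)-\log x, \qquad x>0.
\]
A short calculation gives $f''(x) = 1/x^{2}-1/(x+1)^{2} = (2x+1)/[x(x+1)]^{2} > 0$, so $f$ is strictly convex on $(0,\infty)$. Jensen then yields
\[
  \mathbb{E}\bigl[f(U_{n}^{(k)}+V_{n}^{(k)})\bigr]
  \;\ge\; f\bigl(\mathbb{E}[U_{n}^{(k)}+V_{n}^{(k)}]\bigr)
  \;=\; f(2\alpha^{2})
  \;=\; \log\!\Bigl(1+\tfrac{1}{2\alpha^{2}}\Bigr),
\]
where the middle equality uses $\alpha^{2}=\mathbb{E}[U_{n}^{(k)}]=\mathbb{E}[V_{n}^{(k)}]$ from~(\ref{EquationDefAlphaUplinik}) together with linearity of expectation. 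Dividing by $2K$ produces exactly $\bar{R}_{\text{TDMA}}\ge R^{\star}_{\text{TDMA}}$.

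The one subtlety worth flagging is that in the ETP downlink, $U_{n}^{(k)}$ and $V_{n}^{(k)}$ are \emph{not} independent: unlike the uplink, both are deterministic functions of the single home-user location $L_{n}^{(k)}$, because the two interfering BSs are observed through that user's coordinate. This would obstruct any central-limit-type argument, but Jensen's inequality requires only convexity and integrability, not independence, so no difficulty arises. The marginal expectations $\mathbb{E}[U_{n}^{(k)}]$ and $\mathbb{E}[V_{n}^{(k)}]$ still coincide by the symmetry $L\stackrel{d}{=}-L$ for $L\sim\mathbb{U}[-R,R]$, which sends $2R+L$ to $2R-L$. In short, there is no real technical obstacle; the statement reduces to a single application of Jensen's inequality, and one expects strict inequality except in the degenerate case where $U_{n}^{(k)}+V_{n}^{(k)}$ is almost surely constant.
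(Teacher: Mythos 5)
Your proposal is correct and follows essentially the same route the paper intends: the theorem's proof is omitted with a pointer to the proof of Theorem~\ref{TheoremCDMACapcitySCP}, which is exactly the Jensen's-inequality argument you give for the convex map $x \mapsto \log(1+1/x)$, combined with $\mathbb{E}[U_{n}^{(k)}]=\mathbb{E}[V_{n}^{(k)}]=\alpha^{2}$ from~(\ref{EquationDefAlphaUplinik}); the equality $\bar{R}_{\text{TDMA}}=\bar{R}_{\text{CDMA}}$ is indeed immediate since both SIRs reduce to the same random variable. Your observation that $U_{n}^{(k)}$ and $V_{n}^{(k)}$ are dependent here (both being functions of the single home-user location) but that Jensen's inequality does not care is accurate and a worthwhile clarification the paper leaves implicit.
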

Simulation results of the average throughput are shown in Fig.~\ref{FigAverageThroughput}. It shows that the Wyner model can reasonably characterize the average throughput in CDMA with perfect channel inversion by appropriately tuning the parameter $\alpha$ according to (\ref{EquationDefAlphaCI}). However, it is inaccurate in other cases, especially for equal transmit power per user. The underlying reason is two-fold: (i) perfect channel inversion partially counteracts the effect of random user distributions over a cell; (ii) in CDMA with perfect channel inversion, the intercell interference does not depend on the specific user locations at the two neighboring cells due to the interference averaging effect as shown in Fig.~\ref{FigCDMADownlinkOutage}.

\section{Downlink with MCP} \label{DownlinkMCP}
In this section, the sum throughput with MCP in CDMA and intracell TDMA is investigated.
Rewrite (\ref{EquationWynerModelDownlink}) in the compact matrix form as
\begin{align}
{\bf Y} = {\bf H}^{T} {\bf X} + {\bf Z},  \label{EquationWynerModelDownlink_Matrix}
\end{align}
where ${\bf H}^{T}$ is the $NK \times N $ location-dependent channel matrix.

The downlink per-cell power constraint is given by
\begin{align}
( \mathbb{E} [{\bf X} {\bf X}^T] )_{ii} \le P, \qquad \text{for } i=1, \ldots, N .\label{EquationDownlinkPower}
\end{align}

The key tool used in the analysis of sum throughput is the minimax uplink-downlink duality theorem established in \cite{Yu06}, restated here for clarity.

\begin{theorem} \label{TheoremDuality}
(Minimax uplink-downlink duality \cite{Yu06}) For a given channel matrix $ \mathbf {H}$, the sum capacity of the downlink channel (\ref{EquationWynerModelDownlink_Matrix}) under per-cell power constraint (\ref{EquationDownlinkPower}) is the same as the sum capacity of the dual uplink channel affected by a diagonal ``uncertain'' noise under the sum power constraint :
\begin{align}
C^{\text{sum}} ( \mathbf{H}, S) = \min_{\mathbf{A}} \max_{\mathbf{D}} \log \left( \frac{\det ( \mathbf{H} \mathbf{D} \mathbf{H}^T+ \mathbf{A}) } {\det ( \mathbf{A}) } \right), \nonumber
\end{align}
where $S= \frac{P}{\sigma^2}$, $\mathbf{A}$ and $\mathbf{D}$ are $N$-dim and $NK$-dim nonnegative diagonal matrices such that $ \text{Tr} (\mathbf{A}) \le 1/S $ and $ \text{Tr} (\mathbf{D})  \le 1 $.
\end{theorem}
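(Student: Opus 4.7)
The plan is to establish the minimax duality via Lagrangian duality applied to the per-cell power constraints, combined with the classical sum-power broadcast-multiple-access (BC-MAC) duality. Since the per-cell constraint is what makes the downlink nontrivial (relative to the sum-power case), the natural move is to dualize it and show that the Lagrange multipliers play the role of the uncertain noise matrix $\mathbf{A}$ on the dual uplink, while the uplink power allocation plays the role of $\mathbf{D}$.

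First, I would express the downlink sum capacity under the per-cell power constraint (\ref{EquationDownlinkPower}) using the dirty-paper-coding characterization: maximize $\log \det(\mathbf{I} + \mathbf{H}^T \mathbf{Q} \mathbf{H}/\sigma^{2})$ over the set of positive semidefinite transmit covariances $\mathbf{Q}$ whose diagonal blocks satisfy the $N$ per-cell trace constraints. Next, I would attach nonnegative Lagrange multipliers $\lambda_{1},\dots,\lambda_{N}$ to these $N$ constraints, form the Lagrangian, and observe that the multipliers assemble into a diagonal matrix (which, after normalizing by $\text{Tr}(\boldsymbol{\lambda})=1/S$, becomes the $\mathbf{A}$ in the theorem). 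The inner maximization over $\mathbf{Q}$ then becomes an unconstrained sum-power-like problem with a weighted power penalty induced by $\mathbf{A}$.

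The second step is to invoke sum-power BC-MAC duality on the inner problem: the weighted downlink maximization is equivalent to a dual uplink maximization in which the receiver sees noise with covariance $\mathbf{A}$, and the uplink users transmit with a sum-power constraint captured by $\text{Tr}(\mathbf{D}) \le 1$. This converts the inner max into $\max_{\mathbf{D}} \log\det(\mathbf{H}\mathbf{D}\mathbf{H}^{T} + \mathbf{A})/\det(\mathbf{A})$, and the outer Lagrangian minimization over multipliers becomes $\min_{\mathbf{A}}$, yielding the stated minimax expression. The last step is to certify strong duality, i.e., no duality gap and legitimacy of the min-max exchange; this follows because the objective is jointly concave in $\mathbf{D}$ and convex in $\mathbf{A}$ over compact constraint sets, so Sion's minimax theorem applies, and Slater's condition holds for the original primal.

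The main obstacle I expect is the strong-duality / saddle-point justification: showing that the per-cell-constrained downlink optimum coincides with the Lagrangian dual value, and that min and max can be interchanged cleanly. Convexity of $\log\det(\mathbf{H}\mathbf{D}\mathbf{H}^{T}+\mathbf{A}) - \log\det(\mathbf{A})$ in $\mathbf{A}$ (for fixed $\mathbf{D}$) requires a careful argument, since $-\log\det(\mathbf{A})$ is convex but the first term is concave in $\mathbf{A}$; one has to exploit the specific diagonal structure of $\mathbf{A}$ and rewrite the expression (e.g.\ via a matrix-inversion identity) so that the convex/concave structure in $(\mathbf{A},\mathbf{D})$ needed for Sion's theorem is manifest. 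The remaining ingredients are mostly bookkeeping: matching normalizations (the $1/S$ in $\text{Tr}(\mathbf{A})$ and the unit bound on $\text{Tr}(\mathbf{D})$) and verifying that the active per-cell constraints pin down the multipliers correctly at the optimum.
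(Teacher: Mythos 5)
A preliminary remark: the paper itself offers no proof of Theorem~\ref{TheoremDuality} --- it is imported verbatim from \cite{Yu06} and used as a black box --- so there is no internal argument to compare yours against; any proof must reconstruct the external reference. Your sketch takes the Lagrangian route: dualize the $N$ per-cell constraints, let the multipliers assemble into the diagonal ``uncertain noise'' $\mathbf{A}$, and apply sum-power BC--MAC duality to the inner problem. This is the per-antenna-power-constraint argument in the style of Yu and Lan, rather than the least-favorable-noise minimax duality that \cite{Yu06} actually uses; both are legitimate and arrive at the same expression, and your identification of the multipliers with $\mathbf{A}$ (after the $1/S$ normalization) and of the dual uplink powers with $\mathbf{D}$ is the right correspondence.

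There is, however, one genuine gap in your first step. The downlink (\ref{EquationWynerModelDownlink_Matrix}) is a broadcast channel whose $NK$ receivers cannot cooperate, so its sum capacity is \emph{not} $\max_{\mathbf{Q}} \log\det(\mathbf{I}+\mathbf{H}^T\mathbf{Q}\mathbf{H}/\sigma^2)$ over covariances with the prescribed diagonal; that quantity is the capacity of the cooperative point-to-point channel and is in general a strict upper bound (the Sato bound evaluated at independent noise). To start the Lagrangian machinery from the correct primal you must either (i) begin from the dirty-paper-coding sum-rate characterization, whose optimality for the Gaussian MIMO broadcast channel is itself a nontrivial theorem established via the worst-case-noise argument, or (ii) carry the minimization over the receivers' noise correlation explicitly, as \cite{Yu06} does, in which case the minimax duality swaps the input-constraint and noise-constraint sets between the two links. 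Once the primal is fixed, the rest of your plan goes through: the weighted penalty $\sum_i \lambda_i (\mathbf{Q})_{ii}$ is absorbed by rescaling the channel by the inverse square root of the multiplier matrix, which is precisely why the dual uplink noise is diagonal with $\mathrm{Tr}(\mathbf{A})\le 1/S$; and the convexity in $\mathbf{A}$ that worries you is a standard fact, since $\log\det(\mathbf{H}\mathbf{D}\mathbf{H}^T+\mathbf{A})-\log\det(\mathbf{A})=\log\det\bigl(\mathbf{I}+\mathbf{X}^{1/2}\mathbf{A}^{-1}\mathbf{X}^{1/2}\bigr)$ with $\mathbf{X}=\mathbf{H}\mathbf{D}\mathbf{H}^T$ fixed, which is convex in $\mathbf{A}\succ 0$, so Sion's theorem applies as you propose.
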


Consider CDMA first. Applying Theorem \ref{TheoremDuality}, the sum throughput with random user locations in CDMA is derived in the following corollary.
\begin{corollary}
The downlink per-cell sum throughput with random user locations in CDMA is
\begin{align}
C^{\text{sum}}_{\text{CDMA}} =\lim_{N \to \infty} \frac{1}{N } \mathbb{E}   \left[ \min_{ \mathbf{A} } \max_{\mathbf{D} } \log \left( \frac{\det ( \mathbf{H} \mathbf{D}  \mathbf{H}^T+ \mathbf{A} ) } {\det ( \mathbf{A}) } \right) \right], \nonumber
\end{align}
where the expectation is taken over all possible $NK \times N$ channel matrices $ \mathbf{H}^T$.
\end{corollary}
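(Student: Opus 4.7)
The plan is to obtain this corollary as an essentially direct application of the minimax uplink-downlink duality in Theorem \ref{TheoremDuality}, after conditioning on the random user locations. First, I would fix a realization of the location vector $\{L_n^{(k)}\}$. Once these locations are fixed, the path-loss gains $a_n^{(k)}, b_n^{(k)}, c_n^{(k)}$ are deterministic, and so the channel matrix $\mathbf{H}^T$ appearing in the compact model (\ref{EquationWynerModelDownlink_Matrix}) is a deterministic $NK\times N$ matrix. The downlink CDMA system is then precisely a vector Gaussian broadcast channel with $N$ transmitters (the BSs) and $NK$ receive terminals (one antenna per user in every cell), subject to the per-BS power constraint (\ref{EquationDownlinkPower}). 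Consequently, Theorem \ref{TheoremDuality} applies verbatim and gives, conditional on $\{L_n^{(k)}\}$,
\[
C^{\text{sum}}(\mathbf{H},S) \;=\; \min_{\mathbf{A}}\max_{\mathbf{D}} \log\!\left(\frac{\det(\mathbf{H}\mathbf{D}\mathbf{H}^T+\mathbf{A})}{\det(\mathbf{A})}\right),
\]
with $\mathbf{A}$ an $N$-dimensional and $\mathbf{D}$ an $NK$-dimensional nonnegative diagonal matrix satisfying $\mathrm{Tr}(\mathbf{A})\le 1/S$ and $\mathrm{Tr}(\mathbf{D})\le 1$.

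Next, I would average over the user locations and normalize. Since the conditional sum capacity is a measurable function of $\mathbf{H}$, the per-cell ergodic sum throughput is obtained by taking the expectation of $C^{\text{sum}}(\mathbf{H},S)$ with respect to the distribution of $\mathbf{H}^T$ induced by the i.i.d.\ uniform variables $\{L_n^{(k)}\}$, dividing by $N$, and sending $N\to\infty$. This gives exactly the stated formula. A mild but necessary remark is that the expectation and the $\lim_{N\to\infty}\frac{1}{N}$ are being written in the order \emph{limit outside, expectation inside}, which is consistent with the same convention already used in the definitions of $C^\star_{\text{TDMA}}$ and $C^\star_{\text{CDMA}}$ in Section \ref{UplinkMCP}.

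The main obstacle, and the only nontrivial point, is to verify that the per-cell sum capacity really does admit such an ergodic averaged representation; equivalently, that the interchange of $\lim_N$ and $\mathbb{E}$ is harmless. Because the channel matrix $\mathbf{H}$ is banded (each user sees only its home BS and two neighbors) and the user locations are i.i.d.\ across cells, the random matrix $\mathbf{H}$ has a block-stationary, finite-range-dependent structure along the cell index $n$. A uniform upper bound of the form $\frac{1}{N}C^{\text{sum}}(\mathbf{H},S)\le \log(1+\mathrm{const}\cdot KS)$, which follows from dropping the minimization over $\mathbf{A}$ and using the bounded path losses, supplies the required integrability, and a subadditive/ergodic argument then justifies the exchange. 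Once this routine ergodicity step is in place, no further computation is required, and the corollary follows as a corollary of Theorem \ref{TheoremDuality} in the strict sense of the word.
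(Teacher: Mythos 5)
Your proposal is correct and follows essentially the same route as the paper, which presents the corollary as a direct application of Theorem \ref{TheoremDuality}: condition on the user locations so that $\mathbf{H}$ is deterministic, invoke the minimax duality for the resulting vector broadcast channel under the per-cell power constraint, then average over the location-induced distribution of $\mathbf{H}$ and normalize per cell as $N\to\infty$. The additional care you take with integrability and the ergodic limit is a reasonable refinement but goes beyond what the paper records, since it treats the corollary as immediate.
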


With random user locations, the channel matrices are random and thus asymmetric. Therefore, the above optimization problem has no simple and explicit solution. Thus, equal power allocation is assumed, i.e., $ \mathbf{D} = \frac{1}{KN} \mathbf{I} $, which gives a lower bound to the optimal solution:
\begin{align} \label{EquationCapacityRandom}
C^{l}_ {\text{CDMA}} =  \lim_{N \to \infty} \mathbb{E}_{\mathbf{H}} \left[ \frac{1}{N } \min_{\mathbf{A}}  \log \left( \frac{\det \left( 1/(KN) \mathbf{H} \mathbf{H}^T+ \mathbf{A} \right) } {\det ( \mathbf{A}) } \right) \right].
\end{align}
For arbitrary $K$, it is still hard to solve the above minimization problem, but if $K \to \infty$, the following lemma holds.
\begin{lemma} \label{lemmaCapacityRandomConvergence}
When $K \to \infty$, $\frac{1}{K}  \mathbf{H} \mathbf{H}^T$ converges to a deterministic and symmetric matrix, that is
\begin{align}  \label{EquationLambdaLimit}
\frac{1}{K}  \mathbf{H} \mathbf{H}^T  \overset{K \to \infty}{\longrightarrow}  \mathbf{\bar \Lambda},
\end{align}
where $\mathbf{\bar \Lambda}$ is defined in (\ref{DefBarLambda}).
\end{lemma}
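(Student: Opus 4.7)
The plan is to apply the strong law of large numbers entrywise to the $N\times N$ matrix $\frac{1}{K}\mathbf{H}\mathbf{H}^T$. First I would make the sparsity of $\mathbf{H}$ explicit: since user $k$ in cell $n$ only hears base stations $n-1$, $n$ and $n+1$, the row of $\mathbf{H}^T$ indexed by $(n,k)$ has exactly three non-zero entries, namely $a_n^{(k)}$, $b_n^{(k)}$ and $c_n^{(k)}$ in columns $n$, $n-1$ and $n+1$ respectively. Transposed, this says that column $(n,k)$ of $\mathbf{H}$ is supported only on rows $\{n-1,n,n+1\}$.

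From this structure I would compute the entries of $\mathbf{H}\mathbf{H}^T$. For example, the diagonal is
\begin{align}
[\mathbf{H}\mathbf{H}^T]_{m,m}=\sum_{k=1}^K\Bigl((a_m^{(k)})^2+(b_{m+1}^{(k)})^2+(c_{m-1}^{(k)})^2\Bigr), \nonumber
\end{align}
and analogous expressions hold for the $\pm 1$ and $\pm 2$ diagonals (for instance, $[\mathbf{H}\mathbf{H}^T]_{m,m+2}=\sum_k b_{m+1}^{(k)}c_{m+1}^{(k)}$); all entries more than two off the main diagonal vanish because no single column of $\mathbf{H}$ can contribute to two rows that far apart. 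Thus $\mathbf{H}\mathbf{H}^T$ is pentadiagonal, and every nonzero entry is a sum of $K$ i.i.d.\ random variables produced by the i.i.d.\ user locations in the relevant cells.

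I would then invoke the strong law of large numbers. Each summand is a product of at most two of the channel gains $a_n^{(k)}, b_n^{(k)}, c_n^{(k)}$, which are measurable functions of a single location $L_n^{(k)}\sim\mathbb{U}[-R,R]$, hence i.i.d.\ across $k$. Provided every summand has finite first moment, SLLN gives, almost surely as $K\to\infty$, $\frac{1}{K}[\mathbf{H}\mathbf{H}^T]_{m,l}\to\bar\lambda_{m,l}:=\mathbb{E}[\text{summand}]$. By the spatial homogeneity of the model and the $L\stackrel{d}{=}-L$ symmetry of the uniform distribution on $[-R,R]$, the limit $\bar\lambda_{m,l}$ depends only on $|m-l|$, so the limit matrix is real, symmetric and Toeplitz-banded of bandwidth five; collecting the three distinct expectations yields the matrix $\mathbf{\bar\Lambda}$ of (\ref{DefBarLambda}).

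The one delicate point---and the main obstacle---is integrability, since $r_{n,n}^{(k)}=|L_n^{(k)}|$ can be arbitrarily small and a naive $r^{-\beta}$ path loss would make $\mathbb{E}[(a_n^{(k)})^2]$ infinite for $\beta\ge 1$. I would address this by reading the parameter $r_0$ that appears in $a_n^{(k)}=(r_0/r_{n,n}^{(k)})^{\beta/2}$ as a genuine reference distance that floors $r_{n,n}^{(k)}$ at $r_0$ (equivalently, each $a_n^{(k)},b_n^{(k)},c_n^{(k)}$ is capped at $1$), a convention that is physically natural and consistent with the paper's power-budget setup. Once every summand is bounded, entrywise SLLN is immediate; since $N$ is fixed while $K\to\infty$, entrywise almost-sure convergence gives $\frac{1}{K}\mathbf{H}\mathbf{H}^T\to\mathbf{\bar\Lambda}$ in any matrix norm, completing the lemma.
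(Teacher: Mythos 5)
Your proof follows essentially the same route as the paper's: write each entry of $\frac{1}{K}\mathbf{H}\mathbf{H}^T$ as an average of $K$ i.i.d.\ terms generated by the user locations and apply the law of large numbers entrywise with $N$ fixed, so the two arguments coincide in substance. Your version is in fact more careful on two points the paper glosses over: the index bookkeeping (the paper writes the diagonal as $\sum_k\bigl((a_n^{(k)})^2+(b_n^{(k)})^2+(c_n^{(k)})^2\bigr)$ rather than your correct $\sum_k\bigl((a_m^{(k)})^2+(b_{m+1}^{(k)})^2+(c_{m-1}^{(k)})^2\bigr)$, although the limit is unaffected because locations are i.i.d.\ across cells), and the integrability of $(a_n^{(k)})^2=(r_0/|L_n^{(k)}|)^{\beta}$, whose mean is genuinely infinite for $\beta\ge 2$ under a literal uniform location on $[-R,R]$, so some regularization such as the distance floor you impose is actually needed for the stated deterministic limit to exist.
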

\begin{proof}
See Appendix \ref{ProofLemma1}.
\end{proof}

Lemma \ref{lemmaCapacityRandomConvergence} enables us to derive the main theorem in this section.
\begin{theorem}
When $K \to \infty$, the lower bound to the per-cell sum throughput in CDMA converges to
\begin{align}
C^{l}_{\text{CDMA}} \overset{K \to \infty}{\longrightarrow} \lim_{N \to \infty} \frac{1}{N } \log \det( \mathbf{I} +  S \mathbf{\bar{\Lambda} } )  . \label{EquationCapacityLower}
\end{align}
\end{theorem}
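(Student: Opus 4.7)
My plan has two stages: first, use Lemma \ref{lemmaCapacityRandomConvergence} to pass from the stochastic inner optimization defining $C^l_{\text{CDMA}}$ to a deterministic one as $K\to\infty$; second, argue that the uniform allocation $\mathbf{A}=\frac{1}{SN}\mathbf{I}$ is asymptotically optimal in the resulting problem, and evaluate the objective at this choice to obtain the claimed limit.

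For the first stage, define
\begin{align}
g_N(\mathbf{M}) := \min_{\mathbf{A}\text{ diag.}\ge 0,\,\mathrm{Tr}(\mathbf{A})\le 1/S}\, \log\frac{\det(\mathbf{M}/N+\mathbf{A})}{\det(\mathbf{A})}. \nonumber
\end{align}
Since $-\log\det\mathbf{A}\to+\infty$ as any $A_{ii}\downarrow 0$, the minimizer stays bounded away from the boundary, so $g_N$ is continuous in $\mathbf{M}$ on the positive semidefinite cone. Lemma \ref{lemmaCapacityRandomConvergence} gives the entrywise convergence $\frac{1}{K}\mathbf{H}\mathbf{H}^T\to\mathbf{\bar\Lambda}$, so by the continuous mapping theorem $g_N(\tfrac{1}{K}\mathbf{H}\mathbf{H}^T)\to g_N(\mathbf{\bar\Lambda})$. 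A uniform-in-$K$ bound on the integrand combined with bounded convergence then lets me interchange $\lim_{K\to\infty}$ with $\mathbb{E}[\cdot]$ and with $\lim_{N\to\infty}$, reducing the theorem to
\begin{align}
\lim_{N\to\infty}\tfrac{1}{N}g_N(\mathbf{\bar\Lambda}) = \lim_{N\to\infty}\tfrac{1}{N}\log\det(\mathbf{I}+S\mathbf{\bar\Lambda}). \nonumber
\end{align}

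For the ``$\le$'' direction I would substitute $\mathbf{A}=\frac{1}{SN}\mathbf{I}$ (which saturates $\mathrm{Tr}(\mathbf{A})=1/S$) into the inner minimand: factoring $\tfrac{1}{SN}$ out of the numerator determinant exactly cancels $\det(\mathbf{A})$ and leaves $\log\det(\mathbf{I}+S\mathbf{\bar\Lambda})$, so $g_N(\mathbf{\bar\Lambda})\le \log\det(\mathbf{I}+S\mathbf{\bar\Lambda})$. The matching ``$\ge$'' direction requires showing no nonuniform diagonal $\mathbf{A}$ does strictly better in the thermodynamic limit. Because user locations are i.i.d.\ across cells, the matrix $\mathbf{\bar\Lambda}$ defined in (\ref{DefBarLambda}) is a banded symmetric Toeplitz matrix with constant diagonal; the KKT stationarity condition $[(\mathbf{\bar\Lambda}/N+\mathbf{A})^{-1}]_{ii}-A_{ii}^{-1}=-\lambda$ for all $i$ is satisfied by $\mathbf{A}=\tfrac{1}{SN}\mathbf{I}$ asymptotically because the inverse of a banded symmetric Toeplitz matrix has asymptotically constant diagonal.

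The main obstacle is making this last step rigorous. The cleanest route I see is via a circulant approximation: replace $\mathbf{\bar\Lambda}$ by its nearest circulant $\mathbf{\tilde\Lambda}$ (differing only in $O(1)$ corner entries, hence with negligible effect on $\tfrac{1}{N}\log\det$ by Szeg\H{o}'s first theorem). The resulting problem enjoys exact cyclic-shift symmetry, so any shift $\mathbf{P}^k\mathbf{A}(\mathbf{P}^T)^k$ of a feasible $\mathbf{A}$ is also feasible with identical objective value. Averaging a feasible $\mathbf{A}$ over all $N$ cyclic shifts produces $\tfrac{1}{SN}\mathbf{I}$, and convexity of the inner objective in the diagonal entries of $\mathbf{A}$ (verifiable from the Hessian $[H]_{ij}=-[(\mathbf{\tilde\Lambda}/N+\mathbf{A})^{-1}]_{ij}^{2}+\delta_{ij}/A_{ii}^{2}$ on the positive orthant) then forces the minimizer to have constant diagonal, giving the matching lower bound and completing the proof.
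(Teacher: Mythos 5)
Your proposal follows the same route as the paper: invoke Lemma \ref{lemmaCapacityRandomConvergence} to replace $\frac{1}{K}\mathbf{H}\mathbf{H}^T$ by the deterministic $\mathbf{\bar\Lambda}$, then evaluate the inner minimization at $\mathbf{A}=\frac{1}{SN}\mathbf{I}$ to obtain $\log\det(\mathbf{I}+S\mathbf{\bar\Lambda})$. The difference is one of rigor rather than approach: the paper simply asserts that the minimum is achieved at the uniform $\mathbf{A}$ because the limit matrix is ``deterministic and symmetric,'' whereas you correctly supply the missing justification (convexity of the objective in $\mathbf{A}$, the Toeplitz/circulant structure of $\mathbf{\bar\Lambda}$, and symmetrization over cyclic shifts), as well as the limit-interchange arguments the paper glosses over.
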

\begin{proof}
By lemma \ref{lemmaCapacityRandomConvergence}, $\frac{1}{K} \mathbf{H} \mathbf{H}^T$ converges to a deterministic and symmetric matrix. Therefore, the minimum in (\ref{EquationCapacityRandom}) is achieved by substituting $\mathbf{A}= \frac{1}{NS} \mathbf{I} $. Then, the theorem follows by simple calculations.
\end{proof}

Let us compare the lower bound (\ref{EquationCapacityLower}) with the per cell sum throughput given by the Wyner model\cite{Shamai07}, which is
\begin{align} \label{EquationDownlinkCapacityWyner}
C^{\text{W}} = \lim_{N \to \infty} \frac{1}{N} \log \det   ( \mathbf{I} +  S \mathbf{\Lambda_0}) ,
\end{align}
where $ \mathbf{\Lambda_0}$ is defined as
\begin{align}
[  \mathbf{\Lambda_0} ]_{m,n} = \left \{
\begin{array}{ll} \nonumber
 (1 + 2 \alpha^2)    &  (n,n) \\
 2 \alpha   &  (n,n \pm 1) \\
 \alpha^2   & (n,n \pm 2) \\
 0                     & \text{otherwise} \\
\end{array} \right. .
\end{align}

For fair comparison, the channel gains are normalized to make $ \mathbb{E}[a_{n,k}^2] =1 $. Also, pick the intercell interference intensity according to (\ref{EquationDefAlphaUplinik}).
If we ignore the dependencies among random variables $(a_{n}^{(k)})^2$, $U_{n}^{(k)}$, $V_{n}^{(k)}$ and also assume $ \text{Var} [(U_{n}^{(k)})^{1/2}] =0$, then
\begin{align}
\mathbf{\bar \Lambda}_{n,n}  =  1 + 2 \alpha^2,  \quad \mathbf{\bar \Lambda}_{n,n \pm 1}   =   2 \alpha, \quad
\mathbf{\bar \Lambda}_{n,n \pm 2}  =    \alpha^2 . \nonumber
\end{align}
It follows that $\mathbf{\bar \Lambda}=  \mathbf{\Lambda_0}$ and $ C^{l} = C^{\text{W}}$.

From the above analysis, it shows that if the dependencies among channel gains are ignored and the variance of intercell interference intensity is assumed to be $0$, the lower bound coincides with the throughput given by the Wyner model. Moreover, if equal power allocation is assumed, the throughput with random user locations in CDMA is the same as that given by the Wyner model. However, generally (i) the channel gains depend on the user locations and are highly correlated; (ii) the variance of intercell interference intensity is far from $0$; (iii) unequal power allocation is often used. Therefore, the Wyner model can only characterize the sum throughput in CDMA approximately.

Simulation results in Fig.~\ref{FigCapacity} show the per-cell sum throughput given by the Wyner model (\ref{EquationDownlinkCapacityWyner}) and the lower bound with random user location (\ref{EquationCapacityLower}) in CDMA respectively. Also, a benchmark line corresponding to the case where there is no intercell interference ($\alpha=0$) is drawn. It shows that the Wyner model can characterize the lower bound quite accurately by picking the appropriate parameter $\alpha$ such that $\alpha^2 = \mathbb{E} [ U_{n}^{(k)} ]= \mathbb{E} [ V_{n}^{(k)} ] \approx \frac{1}{(\beta+1)2^{\beta}}$.

In the following, let us move to the intracell TDMA. With equal power allocation, the lower bound to the sum throughput in intracell TDMA can be derived as
\begin{align}
C^{l}_{\text{TDMA}} = \lim_{N \to \infty} \mathbb{E}_{ \mathbf{ \tilde {H}}}  \left[ \frac{1}{N } \min_{\mathbf{A}}  \log \left( \frac{ \det \left( (1/N)  \mathbf{ \tilde{H}} \mathbf{ \tilde{H}^T} + \mathbf{ A}  \right)} { \det ( \mathbf{ A } )} \right) \right],
\end{align}
where the expectation is taken over all possible $N \times N $ channel matrix $\mathbf{ \tilde {H}^T} $. Similar to the uplink case, we can derive the following theorem.
\begin{theorem}
With equal power allocation, the per cell sum throughput in CDMA is larger than that of TDMA, i.e.,
$
C^{l}_{\text{CDMA}} \ge C^{l}_{\text{TDMA}}.
$
Equality is achieved when the user locations are fixed and their distances from the home BSs are the same.
\end{theorem}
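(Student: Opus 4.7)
The plan is to mirror the strategy used for Theorem~\ref{TheoremUplinkCapacity}: invoke the duality of Theorem~\ref{TheoremDuality} to bring both quantities into a common minimization, exploit the concavity of $\log\det$, and finish with Jensen's inequality. Stack the CDMA channel so that $\mathbf{H}\mathbf{H}^{T} = \sum_{k=1}^{K} \mathbf{H}_{k}\mathbf{H}_{k}^{T}$, where $\mathbf{H}_{k}^{T}$ is the $N\times N$ block of $\mathbf{H}^{T}$ corresponding to user $k$ in each cell. Since the user locations are i.i.d.\ across $k$, the matrices $\{\mathbf{H}_{k}\}_{k=1}^{K}$ are themselves i.i.d.\ with the common distribution of the TDMA channel $\tilde{\mathbf{H}}$. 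Under equal power allocation the CDMA dual uses $\mathbf{D}=\tfrac{1}{KN}\mathbf{I}$ and the TDMA dual uses $\mathbf{D}=\tfrac{1}{N}\mathbf{I}$, so, writing $\mathbf{M}_{k} := \tfrac{1}{N}\mathbf{H}_{k}\mathbf{H}_{k}^{T}$, one has $\tfrac{1}{KN}\mathbf{H}\mathbf{H}^{T} = \tfrac{1}{K}\sum_{k=1}^{K}\mathbf{M}_{k} =: \bar{\mathbf{M}}$, while the TDMA quantity involves only the single $\mathbf{M}_{1}$.

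Next I would introduce
\[
g(\mathbf{M}) \;:=\; \min_{\mathbf{A}\succeq 0,\; \mathrm{Tr}(\mathbf{A})\le 1/S} \bigl[\log\det(\mathbf{M}+\mathbf{A}) - \log\det(\mathbf{A})\bigr]
\]
and show that $g$ is concave in $\mathbf{M}$ on the positive semidefinite cone. For any fixed feasible $\mathbf{A}$, the map $\mathbf{M}\mapsto \log\det(\mathbf{M}+\mathbf{A})$ is concave by the standard concavity of $\log\det$, and $\log\det(\mathbf{A})$ is constant in $\mathbf{M}$; thus $g$ is a pointwise infimum of concave functions of $\mathbf{M}$, and such an infimum is again concave by the usual three-line argument.

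Concavity of $g$ and Jensen's inequality applied realization-wise to the empirical average $\bar{\mathbf{M}}$ then give $g(\bar{\mathbf{M}}) \ge \tfrac{1}{K}\sum_{k=1}^{K} g(\mathbf{M}_{k})$. Taking expectations and using that the $\mathbf{M}_{k}$ are identically distributed with the same law as $\tfrac{1}{N}\tilde{\mathbf{H}}\tilde{\mathbf{H}}^{T}$ yields $\mathbb{E}[g(\bar{\mathbf{M}})] \ge \mathbb{E}[g(\mathbf{M}_{1})]$, which on dividing by $N$ and passing to the limit $N\to\infty$ is exactly $C^{l}_{\text{CDMA}} \ge C^{l}_{\text{TDMA}}$.

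The main obstacle I foresee is technical rather than conceptual: verifying the concavity of $g$ cleanly on the constrained feasible set for $\mathbf{A}$, ensuring that the minimum is actually attained (so the infimum-of-concave reasoning is rigorous), and checking that the Jensen step commutes with the outer $N\to\infty$ limit. Compactness of the $\mathbf{A}$-domain together with continuity of the objective on the interior handle attainment, and the Jensen inequality holds for every finite $N$, so the limit presents no issue. The equality case is then essentially free: strict concavity of $\log\det$ forces the Jensen step to be tight only when the $\mathbf{M}_{k}$ coincide almost surely, which is equivalent to all $K$ users in each cell having identical channel matrices, i.e.\ fixed locations with the same distances to the home BS (and hence the same geometry to the neighboring BSs), matching the stated equality condition.
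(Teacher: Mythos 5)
Your proof follows essentially the same route as the paper's: the paper likewise reduces the claim to decomposing the CDMA covariance into a sum of $K$ identically distributed per-user (TDMA-law) terms and then applies Jensen's inequality together with the concavity of $\min_{\mathbf{A}}\log\bigl(\det((1/N)\tilde{\mathbf{H}}\tilde{\mathbf{H}}^{T}+\mathbf{A})/\det(\mathbf{A})\bigr)$ as a function of $\tilde{\mathbf{H}}\tilde{\mathbf{H}}^{T}$. You have merely spelled out the details (the block decomposition, concavity as a pointwise infimum of concave functions, and the limit interchange) that the paper compresses into a one-line remark referring back to Theorem~\ref{TheoremUplinkCapacity}.
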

\begin{proof}
The proof is similar to that of Theorem \ref{TheoremUplinkCapacity}. It follows by Jensen's inequality and the fact that $ \min_{\mathbf{A}}  \log \left( \frac{ \det \left( (1/N)  \mathbf{ \tilde{H}} \mathbf{ \tilde{H}^T} + \mathbf{ A } \right)} { \det ( \mathbf{ A } )} \right) $ is a concave function of  $\mathbf{ \tilde{H}} \mathbf{ \tilde{H}^T} $.
\end{proof}
\begin{remark}
This theorem implies that with equal power allocation, intracell TDMA which is originally capacity-achieving under the Wyner model becomes suboptimal after considering random user locations. Also, in contrast to CDMA, in intracell TDMA, even when $K \to \infty$, the random channel matrix $\mathbf{ \tilde{H}} \mathbf{ \tilde{H}^T}$ does not converge to a deterministic limit. Therefore, we cannot properly tune the parameter $\alpha$ to make the Wyner model characterize the lower bound to the sum throughput in intracell TDMA accurately, as we did in CDMA.
\end{remark}

\section{Extension to OFDMA} \label{OFDMA}
In this section, OFDMA is considered and we find that in terms of the accuracy of the Wyner model, OFDMA lies between TDMA and CDMA. For concreteness, we explain this fact just for the uplink, but the downlink is similar.

OFDMA systems allocate users distinct time-frequency slices ({\it resource block}) consisting of $N_f$ subcarriers in frequency and $N_t$ consecutive OFDM symbols in time. Therefore, there is no intracell interference within a cell. However, with universal frequency reuse, there is intercell interference. Note that in a resource block, the allocation of subcarriers changes over OFDM symbols (due to frequency hopping).

The codeword of each user is modulated onto its resource block. Thus, the interference seen over transmitting a codeword may be attributed to transmissions from a total of $M ( 1\leq M \leq K )$ users in the neighboring cell. Note that in intracell TDMA, $M=1$; while in CDMA, $M=K$.

From the above analysis, the SIR can be derived as
\begin{align}
\text{SIR}_n  =  \frac{ M P_n^{(k)} (r_{n,n}^{(k)})^{-\beta} } { \sum_{i=1}^M P_{n-1}^{(i)} (r_{n-1,n}^{(i)})^{-\beta}+ \sum_{j=1}^M P_{n+1}^{(j)} (r_{n+1,n}^{(j)})^{-\beta} } = \frac{M} {\sum_{k=1}^{M} (U_{n-1}^{(k)}  + V_{n+1}^{(k)}) } \nonumber.
\end{align}
Following the same derivation in the CDMA case, when $M \gg 1$, the outage probability can be derived as
\begin{align}
q  =\mathbb{P} [\text{SIR}_n <\theta] = \mathbb{P} \left[\sum_{k=1}^{M} (U_{n-1}^{(k)}  + V_{n+1}^{(k)}) >\frac{M}{\theta} \right] \approx \mathbb{Q} \left( \frac{ M / {\theta}- \sqrt{M} \mu} { \sqrt{M} \sigma_u} \right), \nonumber
\end{align}
where $\mu$ and $\sigma_u^2$ are defined in (\ref{EquationSigmaU}) with replacement of $K$ by $M$. Also, when $M \to \infty$, $\text{SIR} \to \frac{1}{2\alpha^{2} }$, which is exactly the same as that given by the Wyner model.

Therefore, if $M \gg 1$, OFDMA resembles the CDMA model and the Wyner model is accurate to characterize the SIR distribution. From the simulation results in Fig.~\ref{FigCDMA}, it can be deduced that for such a conclusion to hold, $M$ may need to be larger than $5$ in the uplink. In practice, the actual value of $M$ in OFDMA systems depends on the specific resource allocation, frequency hopping, and ARQ (automatic repeat request) scheme design. For instance, it was argued that $M \approx K$ in Flash-OFDM system by properly designing the resource blocks for each BS (not the same among all the BSs) and assuming perfect symbol synchronization among transmissions of neighboring BSs (see Section $4.4$ in \cite{Tse05}). However, for the LTE cellular standards \cite{Ghosh10}, $M$ is typically much closer to $1$ than to $K$, since the resource blocks are fixed among all the BSs and the protocols are designed to allow for orthogonality amongst users in a cell, as well as potentially among two neighboring cells. Therefore in most implementations of LTE, we would expect the conclusions of the TDMA model to hold quite accurately. In general though, OFDMA systems will lie somewhere between TDMA and CDMA with the specific system design determining which they are more like.

\section{Conclusion and Future Work} \label{conclusion}
In this paper, we studied the accuracy of the 1-D Wyner model as compared to a 1-D model with random user locations. The main results are summarized in Table \ref{tapcap}. The $2$-D Wyner model would presumably be even less accurate vs. a 2-D grid or more randomly spatial counterparts \cite{AndBac10}, since more randomness due to user locations and interference levels would exist.  We also have not considered random channel effects (fading or shadowing) or many other possible permutations of power control, multiple access, or multi-cell processing.  In general, our results suggest that the more averaging there is -- for example due to spread spectrum, diversity, and/or mean-based metrics -- the more accurate the Wyner model is.

\appendix{
\subsection{Approximation of $\mu$ and $\sigma_u^2$ in section \ref{Section_Uplink_SCP_Outage} } \label{ApproximationAlpha}
We have $\mathbb{E} [U_{n}^{(k)} ] = \mathbb{E} \left[\left(\frac{r_{n,n}^{(k)} } {r_{n,n+1}^{(k)} } \right)^{\beta} \right] = \mathbb{E} \left[ \left(\frac{ |L_n^{(k)}| } {d-L_n^{(k)} } \right)^{\beta} \right]. \nonumber$
Consider the following approximation:
\begin{align}
\mathbb{E} \left[ \left(\frac{ |L_n^{(k)}| } {d-L_n^{(k)} } \right)^{\beta} \right] & \approx \frac{ E [|L_n^{(k)}| ]^{\beta} } {d^{\beta} }. \nonumber
\end{align}
The reason behind this approximation is that the numerator accounts more than the denominator for the value. Then, by uniform distribution of $L_n^{(k)}$,
it follows that
\begin{align}
\frac{ E [|L_n^{(k)}| ]^{\beta} } {d^{\beta} } &= \frac{2}{d^{\beta+1}} \int_{0}^{\frac{d}{2}} r^{\beta} \mathrm{d} r = \frac{1}{(\beta+1) 2^{\beta}}. \nonumber
\end{align}

\subsection{Proof of Theorem $1$} \label{ProofTheorem1}
\begin{align}
R_{\text{CDMA}}^{\star} & =  \frac{1}{2K} \mathbb {E} \left[ \log \left( 1+ \frac{1}{ \frac{1}{K} \sum_{k=1}^{K} ( U_n^{(k)} + V_n^{(k)}) } \right) \right] \nonumber \\
    & \overset{(a)}{\le} \frac{1}{K} \sum_{k=1}^{K} \frac{1}{2K} \mathbb {E} \left[ \log \left( 1+ \frac{1}{ U_n^{(k)} + V_n^{(k)} } \right) \right]
     =  R_{\text{TDMA}}^{\star} ,\nonumber
\end{align}
\begin{align}
R_{\text{CDMA}}^{\star} & = \frac{1}{2K} \mathbb {E} \left[ \log \left( 1+ \frac{1}{ \frac{1}{K} \sum_{k=1}^{K} (U_n^{(k)} + V_n^{(k)} ) } \right) \right] \nonumber \\
     & \overset{(b)}{\ge}   \frac{1}{2K} \log \left( 1+ \frac{1} { \frac{1}{K} \sum_{k=1}^{K} \mathbb{E} \left[  U_n^{(k)} +  V_n^{(k)} \right] } \right) \nonumber \\
     & \overset{(c)}{=} \frac{1}{2K} \log \left( 1+ \frac{1}{2 \alpha^2} \right) = \bar{R}_{\text{TDMA}}^{\star}= \bar{R}_{\text{CDMA}}^{\star} . \nonumber
\end{align}
where $(a)$ and $(b)$ follows from Jensen's inequality and the fact that $\log( 1+ \rho x ^{-1})$ is a convex function of $x > 0$ (for any $\rho \ge 0$), and $(c)$ follows from the definition of $\alpha^2$.

\subsection{Proof of Theorem $2$} \label{ProofTheorem2}

Rewrite the covariance matrix $\mathbf{\Lambda_N}$ for CDMA as a summation of $K$ covariance matrices $\mathbf{\Lambda_N^k}$ as $\mathbf{\Lambda_N}= \frac{1}{K} \sum_{k=1}^K \mathbf{\Lambda_N^k} \label{LambdaN}$, where each $\mathbf{\Lambda_N^k}$ corresponds to a covariance matrix given by (\ref{CovTDMA}) in intra-cell TDMA. Then it follows that
\begin{align}
C_{\text{CDMA}}^{\star} & = \lim_{N \to \infty} \mathbb{E} \log ( \det ( \mathbf{\Lambda_N} ) )  = \lim_{N \to \infty}\mathbb{E} \log ( \det ( \frac{1}{K} \sum_{k=1}^K \mathbf{\Lambda_N^k} ) )  \nonumber \\
& \overset{(a)}{\ge} \frac{1}{K} \sum_{k=1}^K \lim_{N \to \infty} \mathbb{E} \log ( \det (\mathbf{\Lambda_N^k} ) ) =\frac{1}{K} \sum_{k=1}^K C_{\text{TDMA}}^{\star} = C_{\text{TDMA}}^{\star} , \nonumber
\end{align}
where $(a)$ follows from Jensen's inequality and the fact that $\log (\det \mathbf{\Lambda})$ is concave for the covariance matrix $\mathbf{\Lambda}$.

\subsection{Proof of Lemma $1$} \label{ProofLemma1}

The $(n,n)$th entry of $\frac{1}{K}  \mathbf{H} \mathbf{H}^T$ is
\begin{align}
\left[ \frac{1}{K}  \mathbf{H} \mathbf{H}^T \right]_{n,n} = \frac{1}{K} \sum_{k=1}^K \left(  (a_{n}^{(k)})^2 +  (b_{n}^{(k)})^2 + (c_{n}^{(k)})^2  \right) . \nonumber
\end{align}
Applying the law of large number, it follows that
\begin{align}
\left[ \frac{1}{K}  \mathbf{H} \mathbf{H}^T \right]_{n,n} \overset{K \to \infty}{\longrightarrow}   \mathbb{E} [(a_{n}^{(k)})^2] + \mathbb{E} [(b_{n}^{(k)})^2] + \mathbb{E} [(c_{n}^{(k)})^2] =\mathbb{E} \left[ (a_{n}^{(k)})^2 \left( 1 + U_{n}^{(k)}+ V_{n}^{(k)} \right) \right] . \nonumber
\end{align}
Finally, the lemma establishes by following the same derivations for other entries of $\frac{1}{K}  \mathbf{H} \mathbf{H}^T$ and $ \mathbf{\bar \Lambda}$ can be found as
\begin{align} \label{DefBarLambda}
[\mathbf{\bar \Lambda}_{m,n} ]
= \left \{
\begin{array}{ll}
  \mathbb{E} \left[ (a_{n}^{(k)})^2 \left( 1 + U_{n}^{(k)}+ V_{n}^{(k)} \right) \right]  &  (n,n) \\
   \mathbb{E}[(a_{n}^{(k)})^2 (U_{n}^{(k)})^{\frac{1}{2}} ] + \mathbb{E}[(a_{n}^{(k)})^2 (V_{n}^{(k)})^{\frac{1}{2}} ]  &  (n,n \pm 1) \\
  \mathbb{E}[( a_{n}^{(k)})^2 (U_{n}^{(k)})^{\frac{1}{2}} (V_{n}^{(k)})^{\frac{1}{2}} ]     & (n,n \pm 2) \\
 0                     & \text{otherwise} \\
\end{array} \right. .
\end{align}
}

\bibliography{BibModelingCelluarNetworks}

\begin{table}
\centering
\caption{The Accuracy of the Wyner model in cellular networks }
\label{tapcap}
\begin{tabular}{l|lll|lll}
\hline
                &                      &  Uplink  &          &              &Downlink       &   \\
\cline{2-7}
                &      Intracell TDMA  & CDMA  & OFDMA     &Intracell TDMA  &CDMA           &OFDMA     \\
\hline
OP with SCP and PCI     &      Low      & High  &  Medium           & Low           &Low            & Low     \\
\hline
AvgTh with SCP and PCI  &      Low       & High  & Medium           & Low           &Medium         &Medium  \\
\hline
AvgTh with SCP and ETP  & None                &None    &None      &Low         &Low            &Low \\
\hline
AvgTh or SumTh with MCP  &     \multicolumn{3} {l|} {Medium (intracell  TDMA becomes suboptimal)}   & \multicolumn{3}{l}   {Medium (capture lowerbound accurately in CDMA )} \\
\hline
\end{tabular}
\end{table}

\begin{figure}
\centering
\includegraphics[width=4.5in]{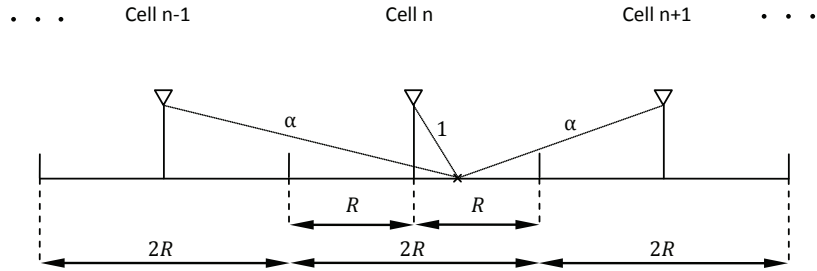}
\caption{The 1-D Wyner model.}
\label{FigWynerModel}
\end{figure}

\begin{figure}
\centering
\includegraphics[width=4.5in]{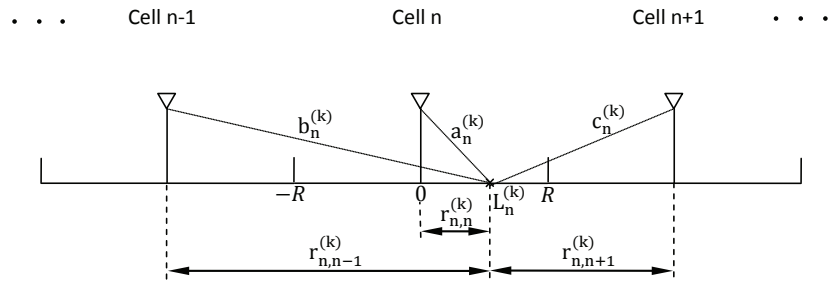}
\caption{The 1-D grid-based model used for comparison, which has random user locations.}
\label{FigWynerModelRandom}
\end{figure}

\begin{figure}
\centering
\includegraphics[width=4.5in]{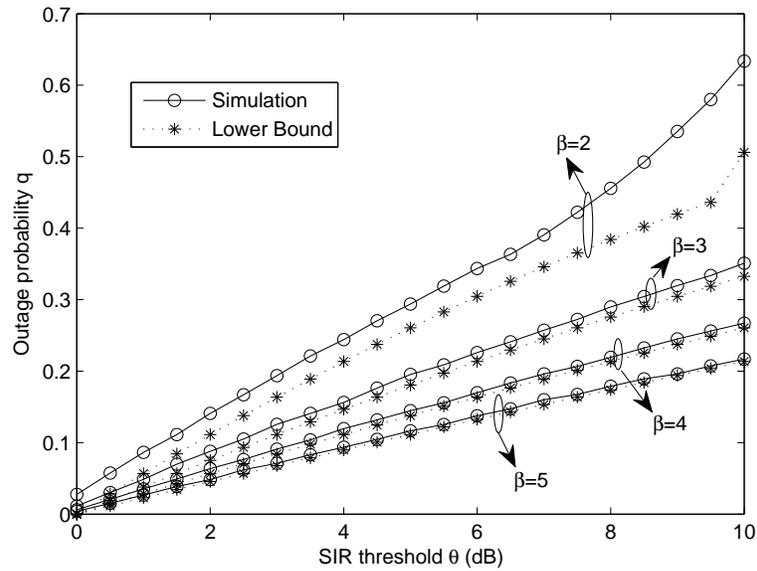}
\centering
\caption{Outage probability in intracell TDMA for the uplink with varying SIR threshold $\theta$.}
\label{FigTDMA}
\end{figure}

\begin{figure}
\centering
\includegraphics[width=5in]{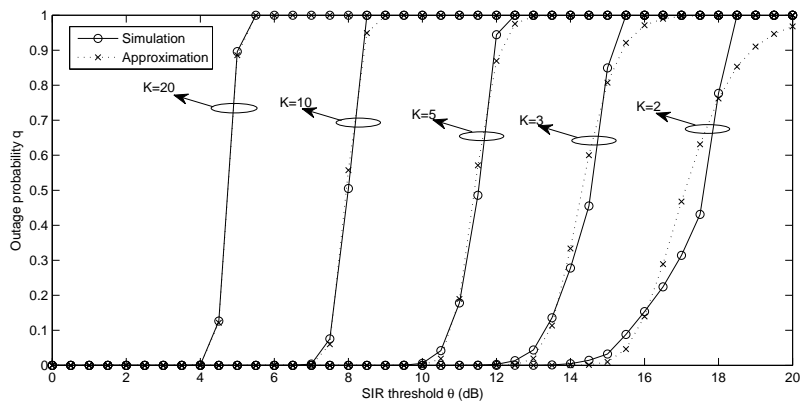}
\centering
\caption{Outage probability in CDMA for the uplink with varying number of users $K$ per cell, pathloss exponent $\beta=4$, $G=64$.}
\label{FigCDMA}
\end{figure}

\begin{figure}
\centering
\includegraphics[width=4.5in]{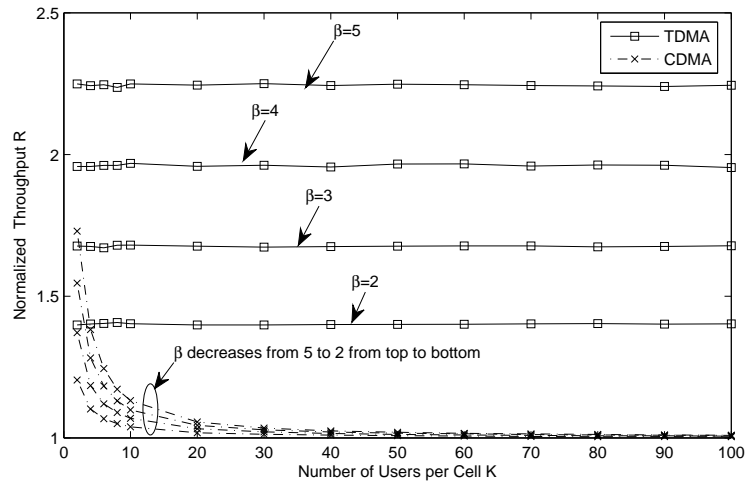}
\centering
\caption{The average throughput normalized by that given by the Wyner model in (\ref{Wyner_Throughput}) for the uplink with varying number of users $K$ per cell, $G=K$ in CDMA.}
\label{ErgodicThroughput}
\end{figure}

\begin{figure}
\centering
\includegraphics[width=4.5in]{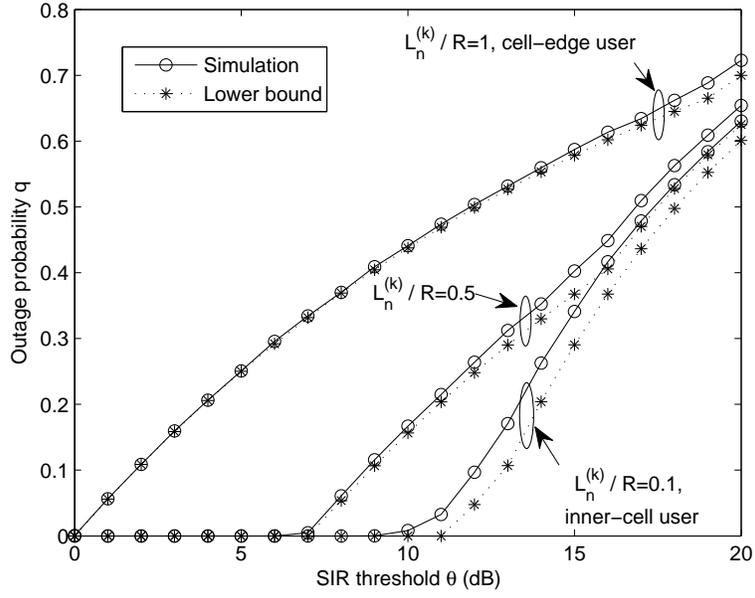}
\centering
\caption{Outage probability in intracell TDMA for the downlink with varying SIR threshold $\theta$ and user locations, $\beta=4$. }
\label{FigTDMANoPowerOutage}
\end{figure}

\begin{figure}
\centering
\includegraphics[width=4.5in]{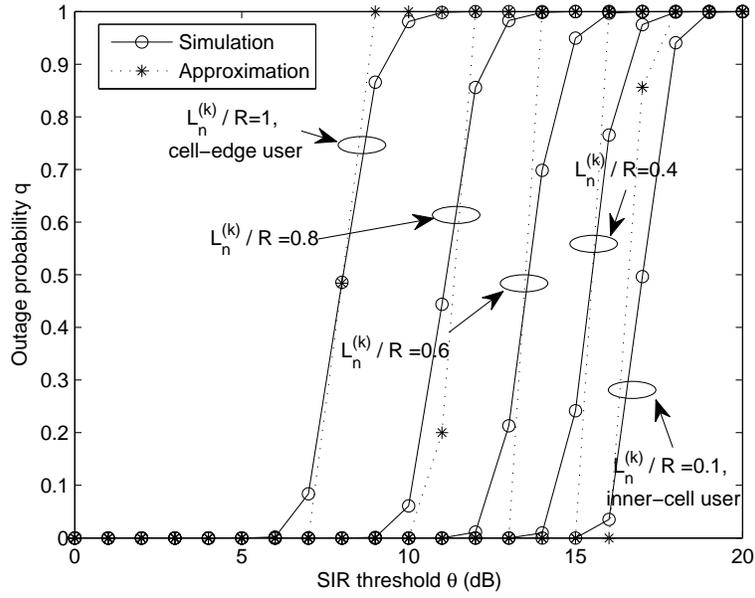}
\centering
\caption{Outage probability in CDMA for the downlink with varying SIR threshold $\theta$ and user locations, $K=50$, $\beta=4$, $G=64$.}
\label{FigCDMADownlinkOutage}
\end{figure}

\begin{figure}
\centering
\includegraphics[width=4.5in]{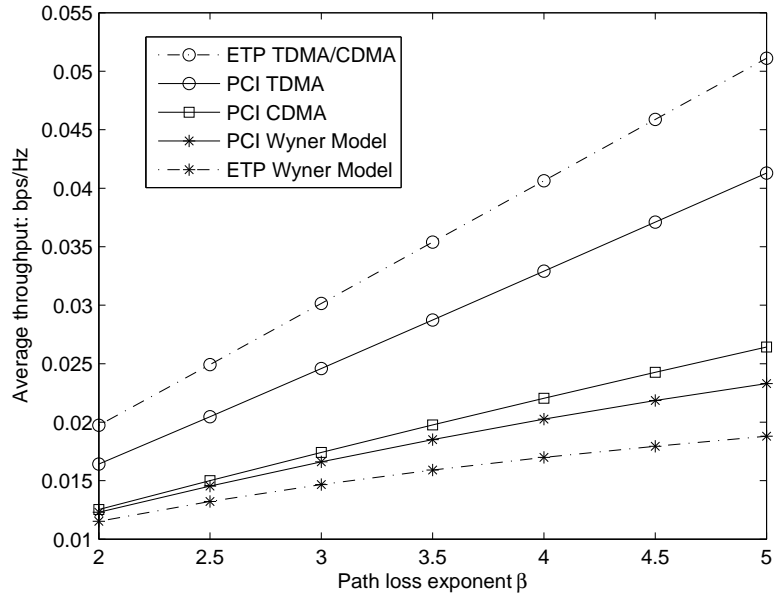}
\centering
\caption{Average throughput with varying pathloss exponent $\beta$, $G=K=100$ in CDMA.}
\label{FigAverageThroughput}
\end{figure}

\begin{figure}
\centering
\includegraphics[width=4.5in]{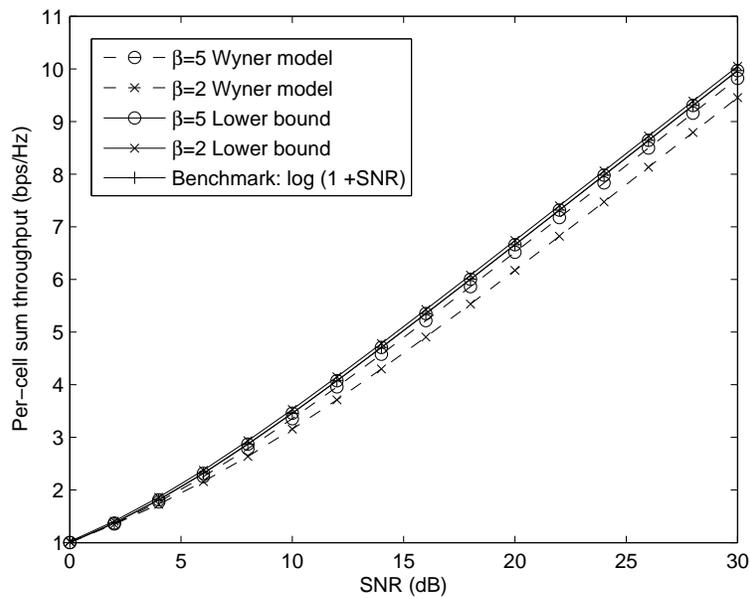}
\centering
\caption{Per-cell sum throughput for the downlink with varying SNR and pathloss exponent $\beta$.}
\label{FigCapacity}
\end{figure}

\end{document}